\documentclass[english,journal]{IEEEtran}

\usepackage{xcolor}
\usepackage{bm}
\usepackage{stfloats}
\usepackage{mathtools}
\usepackage{enumitem}
\usepackage{siunitx}

\usepackage[style=ieee, doi=false, isbn=false]{biblatex}
\addbibresource{ref.bib}

\usepackage{graphicx}
\usepackage{url}      
\usepackage{amsmath}  
\usepackage{amsfonts,amssymb}
\usepackage{microtype}
\usepackage{babel}
\usepackage{csquotes}
\usepackage{comment}

\newtheorem{lemma}{{Lemma}}

\newtheorem{proposition}{{Proposition}}

\newenvironment{proof}{{\noindent\it Proof:}}{\hfill $\square$\par}
\newenvironment{proofp3}{{\noindent\it Proof of Proposition 3:}}{\hfill $\square$\par}



\definecolor{purple}{RGB}{128,0,128}

\let\hat\widehat
\let\tilde\widetilde



\begin{document}

\title{Joint Localization and Orientation Estimation in Millimeter-Wave MIMO OFDM Systems via \\ Atomic Norm Minimization}

\author{Jianxiu~Li,~\IEEEmembership{Graduate Student Member,~IEEE,}
        Maxime~Ferreira~Da~Costa,~\IEEEmembership{Member,~IEEE,}
        and~Urbashi~Mitra,~\IEEEmembership{Fellow,~IEEE}
\thanks{J. Li, M. Ferreira Da Costa, and U. Mitra are with the Department of Electrical and Computer Engineering, University of Southern California, Los Angeles, CA 90089, USA (e-mails: {jianxiul, mferreira, ubli}@usc.edu).}
\thanks{{This paper will be presented in part at the 2022 IEEE International Conference on Acoustics, Speech and Signal Processing (ICASSP 2022)~\cite{Liicassp}}. This work has been funded in part by one or more of the following: Cisco Foundation 1980393, ONR N00014-15-1-2550, ONR 503400-78050, NSF CCF-1410009, NSF CCF-1817200, NSF CCF-2008927, Swedish Research Council 2018-04359, ARO W911NF1910269, and DOE DE-SC0021417.}
}

\markboth{}
{LI, FERREIRA DA COSTA, AND MITRA: ATOMIC NORM BASED LOCALIZATION AND ORIENTATION ESTIMATION FOR MILLIMETER-WAVE MIMO OFDM SYSTEMS}

\maketitle

\begin{abstract}
Herein, an atomic norm based method for accurately estimating the location and orientation of a target from millimeter-wave multi-input-multi-output (MIMO) orthogonal frequency-division multiplexing (OFDM) signals is presented. A novel virtual channel matrix is introduced and an algorithm to extract localization-relevant channel parameters from its atomic norm decomposition is designed. Then, based on the extended invariance principle, a weighted least squares problem is proposed to accurately recover the location and orientation using both line-of-sight and non-line-of-sight channel information. The conditions for the optimality and uniqueness of the estimate and theoretical guarantees for the estimation error are characterized for the noiseless and the noisy scenarios. Theoretical results are confirmed via simulation.
 Numerical results investigate the robustness of the proposed algorithm to incorrect model order selection or synchronization error, and highlight performance improvements over a prior method. The resultant performance nearly achieves the Cram\'{e}r-Rao lower bound on the estimation error.
\end{abstract}

\begin{IEEEkeywords}
Atomic norm minimization, localization, orientation estimation, millimeter-wave MIMO OFDM systems.
\end{IEEEkeywords}

\IEEEpeerreviewmaketitle

\section{Introduction}
\label{sec:intro}
\IEEEPARstart{M}{illimeter}-wave (mmWave) communications are of strong interest for fifth generation multi-input-multi-output (MIMO) systems due to the significant bandwidth afforded by small wavelengths~\cite{Rappaport}. One challenge with these systems is path loss; however, this effect 
can be mitigated by beamforming~\cite{Hur}.  In addition to large bandwidth, mmWave MIMO also experiences limited multipath, thus making this signaling a natural candidate for localization~\cite{lemic2016localization,Hua,Zhou,Saloranta,Shahmansoori,Zhu,Fan,Zheng}. In~\cite{Saloranta,Shahmansoori},  localization based on classic compressed sensing is pursued by exploiting multipath sparsity.
However, the performance is limited by quantization error and grid resolution.  In~\cite{Shahmansoori}, a space-alternating generalized expectation maximization algorithm is proposed to refine the channel estimates for localization, initialized with the channel parameters that are coarsely estimated via a modified distributed compressed sensing simultaneous orthogonal matching pursuit (DCS-SOMP) scheme~\cite{Duarte}. However, this method suffers from local minima when the signal-to-noise ratio (SNR) is low or when the initialization is not sufficiently accurate. {To address the grid mismatch, sparse Bayesian learning based methods \cite{Zhu,Fan} are designed for mmWave MIMO localization, relying on the knowledge of the prior distribution of the channel coefficients. For a more realistic, band-limited, frequency selective channel model, a weighted orthogonal matching pursuit strategy is proposed in \cite{Zheng}, which operates in parallel in the angular and delay domains to improve the localization accuracy with lower requirement on the grid resolution.}

{In contrast, atomic norm minimization (ANM)~\cite{chi2020harnessing}, also known as total variation minimization, has emerged as a useful convex optimization framework for estimating continuous valued parameters without relying on discretization. 
Of late, this method has been actively studied in the context of recovering the parameters of a sparse sum of complex exponentials~\cite{candes2014towards}, as we do herein. 
ANM is known to achieve near-optimal denoising rates in the signal space in that context~\cite{tang_near_2015}, and to recover the correct model-order under a separation condition on the parameters~\cite{duval2020characterization,da2020stable}. Error-bounds on the parameter estimates have been derived under a Gaussian 
noise assumption~\cite{li_approximate_2018}.} 
Furthermore, this framework has also been successfully generalized to recover multi-dimensional signals \cite{chi2014compressive,Tsai} {as we do herein.} In \cite{chi2014compressive}, the two-dimensional case is studied and a higher dimensional case is discussed in \cite{Tsai}.  We observe that our semi-definite program formulation is different from that suggested in \cite{Tsai}.

The ANM framework has previously been employed for the purpose of localization~\cite{Heckel, Wu1,Wu2,TangW}, but in the absence of multipath. In~\cite{Tsai,Beygi,Elnakeeb}, ANM is used for channel estimation, but can not be used for localization as the time-of-arrival (TOA) is not considered in the model.
In contrast to ~\cite{Beygi,Elnakeeb,Li} that focus on the estimation of one-dimensional signals, this paper considers high-accuracy estimation of the localization and orientation from signals using mmWave MIMO orthogonal frequency-division multiplexing (OFDM) signaling. A {\em multi-dimensional} ANM based estimator is proposed to simultaneously recover  all the relevant parameters by harnessing the structure of a novel virtual channel matrix. {We underscore that our methods herein do not require the prior, Bayesian model information for \cite{Zhu,Fan} and we have considered other atomic norm based schemes \cite{Beygi,Elnakeeb,Li} which can exploit the effect of practical pulse shapes in \cite{Zheng}}.

The main contributions of this paper are:
\begin{enumerate}[wide=0pt]
\item A novel {\it virtual channel matrix} capturing the geometry of the paths is designed for mmWave MIMO OFDM channels. 
\item Based on the sparse structure of the virtual channel matrix, a multi-dimensional atomic norm channel estimator is proposed to simultaneously estimate the TOAs, angle-of-arrivals (AOA), and angle-of-departures (AOD) with super-resolution, \emph{i.e.} without relying on a discretization of the search space.
\item Sufficient conditions for exact recovery of the parameters are given in Propositions~\ref{prop:uniqueandoptimal} and \ref{prop:exactRecovery} for the noiseless case; in the presence of noise, an upper bound is derived for the estimation errors in Proposition \ref{errorboundG}, suggesting the appropriate value of a key regularization parameter to improve performance. 
\item Location and orientation are accurately inferred from the solution of ANM through a weighted non-linear least squares scheme,   where the designed weight matrix is compatible with the ANM channel estimator.
\item The theoretical analyses for the proposed scheme are validated via simulation, and the comparisons with the DCS-SOMP method~\cite{Shahmansoori} are presented. It is shown that the proposed scheme offers more than $7$dB gain with respect to the root-mean-square error (RMSE) of estimation when a small number of antennas are employed;  furthermore,  the proposed method nearly achieves the 
 Cram\'{e}r-Rao lower bound (CRLB)~\cite{Shahmansoori} in the studied cases.
\item The effects of incorrect model order selection and synchronization error on the estimation accuracy are numerically investigated to show the robustness of the proposed scheme.
\end{enumerate}

{The present work completes our previous work~\cite{Liicassp} with an analysis of the optimality and uniqueness of the estimate in the noiseless case. Furthermore, we derive an expression for the regularization parameter under a white Gaussian noise assumption as a function of the pilot signals to achieve near-optimal denoising rates, with extended derivations and proofs.
The robustness of the proposed method to incorrect model order selection and  synchronization error is also studied via simulation; the proposed approach is shown to be durable to such mismatch.}

The rest of this paper is organized as follows. Section~\ref{sec:signal} introduces the signal model for the mmWave MIMO OFDM systems. Section~\ref{sec:structure} presents the design of a novel virtual channel matrix, which captures geometric parameters of the signal model. In Section~\ref{sec:method}, the structure of the proposed virtual channel matrix is explicitly exploited for the design of a multi-dimensional atomic norm based channel estimator. Sufficient conditions\ for exact recovery in absence of noise is investigated in Section~\ref{sec:channelestimation}, and a denoising bound  is derived in Section~\ref{sec:errorRate} under a white Gaussian noise assumption. Section \ref{sec:semidefiniteApproximation} presents a semidefinite representation of ANM that can be solved using off-the-shelf convex solvers. Section \ref{sec:channelParametersEstimation} and \ref{sec:EXIP} propose a high accuracy method to recover the location and orientation of the target from the solution of the ANM program based on the Vandermonde factorization of Toeplitz matrices~\cite{Yang2} and the extended invariance principle (EXIP)~\cite{Stoica}. Numerical results are given in Section~\ref{sec:sim} to verify the theoretical analyses and highlight the performance improvements with respect to the previous method. Finally, a conclusion is drawn in Section~\ref{sec:con}. Appendices A and B provide the proofs of two propositions of this work.

Scalars are denoted by lower-case letters, $x$ and column vectors by bold letters $\bm{x}$. The $i$th element of $\bm{x}$ is denoted by $\bm{x}[i]$. Matrices are denoted by bold capital letters, $\bm X$. $\bm{x}_l$ denotes the $l$th column of $\bm{X}$ and $\boldsymbol{X}[i, j]$ is the ($i$, $j$)th element. The operators $\lfloor{x}\rfloor$, $|x|$, $\|\bm  x\|_{1}$,$\|\bm  x\|_{2}$, $\operatorname{sign}(x)$, $\operatorname{Re}(x)$, $\operatorname{diag}(\mathcal{A})$ represent the largest integer that is less than $x$, the magnitude of $x$, the $\ell_1$ norm of $\bm x$, the $\ell_2$ norm of $\bm x$, the complex sign of $x$ defined as $\operatorname{sign}(x) = \frac{x}{\left\Vert x \right\Vert_2}$, the real part of $x$, a diagonal matrix whose diagonal elements are given by $\mathcal{A}$. ${\|\boldsymbol{X}\|}_{F}$ is the Frobenius norm of $\boldsymbol{X}$. $\langle\boldsymbol{A},\boldsymbol{X}\rangle$ stands for the usual matrix inner product  while $\langle\boldsymbol{A},\boldsymbol{X}\rangle_{R}\triangleq\operatorname{Re}(\langle\boldsymbol{A},\boldsymbol{X}\rangle)$. $\boldsymbol{I}_{d}$ and $\boldsymbol{0}_{d}$ stand for a $d\times d$ identity matrix and a $d\times d$ zero matrix, respectively. The notations  $\otimes$ and $\mathbb{E}\{\cdot\}$ denote the Kronecker product and the expectation of a random variable. The operators $\operatorname{rank}(\cdot), \operatorname{Tr}(\cdot), (\cdot)^\mathrm{T}$, and $(\cdot)^{\mathrm{H}}$ are defined as the rank of a matrix, the trace of a matrix, the transpose of a matrix or vector, and the conjugate transpose of a vector or matrix, respectively.


\section{mmWave MIMO OFDM Narrowband Channels}

\subsection{Signal Model}\label{sec:signal}
We adopt the narrowband channel model of~\cite{Shahmansoori}, where a single base station (BS) is equipped with $N_t$ antennas and a target has $N_r$ antennas. The locations of the BS and the target are denoted by $\boldsymbol{q}=\left[q_{x}, q_{y}\right]^{\mathrm{T}} \in \mathbb{R}^{2}$ and $\boldsymbol{p}=\left[p_{x}, p_{y}\right]^{\mathrm{T}} \in \mathbb{R}^{2}$, respectively, where $\boldsymbol{q}$ is known while $\boldsymbol{p}$ is to be estimated. In addition, there is an unknown orientation of the target's antenna array, denoted by $\theta_o$. Assume that one line-of-sight (LOS) path and $K$ non-line-of-sight (NLOS) paths exist in the mmWave MIMO OFDM channel. The $k$-th NLOS path is produced by a scatterer at an unknown location $\boldsymbol{s}_k=\left[s_{k,x}, s_{k,y}\right]^{\mathrm{T}} \in \mathbb{R}^{2}$. 
\begin{figure}[t]
    \centering
    \includegraphics[width=0.95\columnwidth]{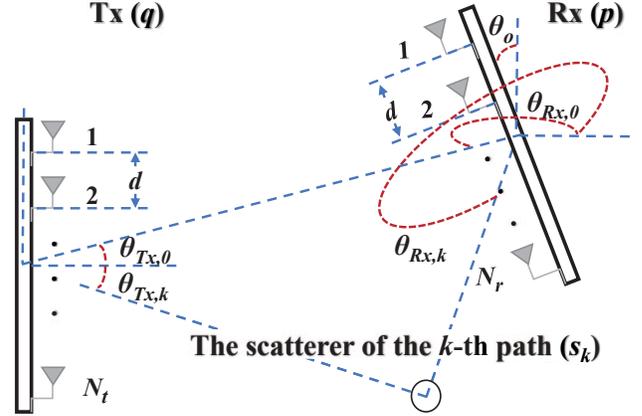} 
    \caption{~{System model.}}
    \label{systemmodel}
\end{figure}
Denoting by $N_s$ the number of the sub-carriers, we transmit $G$ OFDM pilot signals with carrier frequency $f_c$ and bandwidth $B\ll f_c$. Given  the $g$-th pilot signal over the $n$-th sub-carrier $\boldsymbol{x}^{(g,n)}$ \footnote{The pilot signals $\boldsymbol{x}^{(g,n)}\in\mathbb{C}^{N_t}$ is a general expression that permits the incorporation of the beamforming matrix, the design of which is beyond the scope of this paper.}, the $g$-th received signal over the $n$-th sub-carrier is given by
\begin{equation}
\boldsymbol{y}^{(g,n)}=\boldsymbol{H}^{(n)}  \boldsymbol{x}^{(g,n)}+\boldsymbol{w}^{(g,n)},\label{rsignal}
\end{equation}
where $\boldsymbol{w}^{(g,n)} \sim \mathcal{CN}(\bm{0},\sigma^2 \bm{I}_{N_r})$ is an independent, zero-mean, complex Gaussian vector with variance $\sigma^2$. We denote by $\bm{a}_N(\delta)$ the Fourier vector, {\it i.e.,}
\(
    \bm{a}_N(\delta) \triangleq \frac{1}{\sqrt{N}} \left[1, e^{-j 2\pi \delta}, \dots, e^{-j 2\pi (N-1)\delta}\right]^{\mathrm{T}}.
\)
The $n$-th sub-carrier channel matrix $\boldsymbol{H}^{(n)}$ with $0\leq n\leq N_s-1$ is then given by
\begin{equation}
\boldsymbol{H}^{(n)}\triangleq\sum_{k=0}^{K}\gamma_k e^{\frac{-j 2\pi n\tau_k}{N_s T_{s}}}\boldsymbol{\alpha}(\theta_{\mathrm{Rx},k})\boldsymbol{ \beta}\left(\theta_{\mathrm{Tx},k}\right)^{\mathrm{H}},
\label{channelmatrix_subcarrier}
\end{equation}
where $T_s\triangleq\frac{1}{B}$ is the sampling period, $\gamma_k\triangleq\sqrt{N_tN_r}\frac{h_{k}}{\sqrt{\rho_{k}}}$ is the channel coefficient of the $k$-th path, while $\rho_k$ and $h_k$ represents the path loss and the complex channel gain, respectively. The TOA of the $k$th path is denoted as $\tau_k$. The index $k=0$ represents the LOS path. It is assumed that $\tau_0,\tau_1,\cdots,\tau_K<NT_s$. The steering vectors of the system, {\it i.e.,} $\boldsymbol{\alpha}(\theta_{\mathrm{Rx}})$ and $\boldsymbol{\beta}(\theta_{\mathrm{Tx}})$, are defined as $\boldsymbol{\alpha}(\theta_{\mathrm{Rx}})\triangleq \bm{a}_{N_r}\left(\frac{d \sin(\theta_{\mathrm{Rx}})}{\lambda_c} \right)$, $\boldsymbol{\beta}(\theta_{\mathrm{Tx}}) \triangleq \bm{a}_{N_t}\left(\frac{d \sin(\theta_{\mathrm{Tx}})}{\lambda_c} \right)$, where $d$ is the distance between antennas and $\lambda_c\triangleq \frac{c}{f_c}$ is the wavelength with $c$ being the speed of light. From the geometry shown in Fig. \ref{systemmodel}, the TOA, AOA, and AOD of each path, {\it i.e.,} $\tau_{k}, \theta_{\mathrm{Rx}, k}, \text{and } \theta_{\mathrm{Tx}, k}$, with $0\leq k\leq K$, are 
\begin{subequations}\label{eq:geometricMapping}
\begin{align}
\tau_{0} &=\frac{\|\boldsymbol{p}-\boldsymbol{q}\|_{2}} { c},\label{tau0}\\
\tau_{k} &=\frac{\left\|\boldsymbol{q}-\boldsymbol{s}_{k}\right\|_{2} +\left\|\boldsymbol{p}-\boldsymbol{s}_{k}\right\|_{2}} {c}, \quad k>0,\label{tauk} \\
\theta_{\mathrm{Tx}, 0} &=\arctan \left(\frac{p_{y}-q_{y} }{p_{x}-q_{x}}\right),\label{thetatx0}\\ 
\theta_{\mathrm{Tx}, k} &=\arctan \left(\frac{s_{k,y}-q_{y}} {s_{k,x}-q_{x}}\right),  \quad k>0,\label{thetatxk}\\
\theta_{\mathrm{Rx}, 0} &=\pi+\arctan \left(\frac{p_{y}-q_{y}} {p_{x}-q_{x}}\right)-\theta_o,\label{thetarx0}\\
\theta_{\mathrm{Rx}, k} &=\pi+\arctan \left(\frac{p_{y}-s_{k,y}}{p_{x}-s_{k,x}}\right)-\theta_o, \quad k>0.
\label{thetarxk}
\end{align}
\end{subequations}
Stacking the received signal of Equation~(\ref{rsignal}) in a matrix form leads to
\begin{equation}
        \boldsymbol{Y} = \boldsymbol{H}\boldsymbol{X} + \boldsymbol{W},\label{signalmodelstacked} 
\end{equation}
where $\boldsymbol{H}\triangleq\operatorname{diag}\left(\left\{{\boldsymbol{H}}^{(0)}, {\boldsymbol{H}}^{(1)},\cdots, {\boldsymbol{H}}^{(N_s-1)}\right\}\right)$ and
\begin{align*}
    \boldsymbol{Y} \triangleq{}&\left[\left(\boldsymbol{Y}^{(0)}\right)^{\mathrm{T}}, \left(\boldsymbol{Y}^{(1)}\right)^{\mathrm{T}}, \ldots, \left(\boldsymbol{Y}^{(N_s-1)}\right)^{\mathrm{T}}\right]^{\mathrm{T}},\\
    \boldsymbol{X} \triangleq{}& \left[\left(\boldsymbol{X}^{(0)}\right)^{\mathrm{T}},\left(\boldsymbol{X}^{(1)}\right)^{\mathrm{T}}, \ldots, \left(\boldsymbol{X}^{(N_s-1)}\right)^{\mathrm{T}}\right]^{\mathrm{T}},\\
    \boldsymbol{W} \triangleq{}&\left[\left(\boldsymbol{W}^{(0)}\right)^{\mathrm{T}}, \left(\boldsymbol{W}^{(1)}\right)^{\mathrm{T}}, \ldots, \left(\boldsymbol{W}^{(N_s-1)}\right)^{\mathrm{T}}\right]^{\mathrm{T}},
\end{align*}
with
$\boldsymbol{Y}^{(n)} \triangleq \left[\boldsymbol{y}^{(1,n)}, \boldsymbol{y}^{(2,n)}, \ldots, \boldsymbol{y}^{(G,n)}\right]$, $\boldsymbol{X}^{(n)} \triangleq\left[ \boldsymbol{x}^{(1,n)},  \boldsymbol{x}^{(2,n)}, \right.$ 
$\left.\ldots,  \boldsymbol{x}^{(G,n)}\right]$, 
and $\boldsymbol{W}^{(n)} \triangleq\left[ \boldsymbol{w}^{(1,n)},  \boldsymbol{w}^{(2,n)}, \ldots,  \boldsymbol{w}^{(G,n)}\right]$.
Furthermore, it is assumed that the receiver knows the transmitted symbols $\boldsymbol{X}$. The localization and orientation estimation problem is defined as recovering with optimal precision the unknown target position $\bm{p}$ and orientation angle $\theta_o$ from the knowledge of the observation $\bm{Y}$ and the pilot sequence $\bm{X}$. 


\subsection{Structure of the Virtual Channel Matrix}\label{sec:structure}
In this subsection, we present a novel construction of a virtual channel matrix $\bm{H}_v$ which captures the low-rank properties of mmWave MIMO OFDM narrowband channel and jointly incorporates the TOA, AOA and AOD of the problem geometry on each of the sub-carriers. In the sequel, $N_s$ is assumed to be an odd integer for {notational} convenience, and we define the block-matrices $\bm{Q}_{i,m}$, for $m=1,2$, with the $(j_m,z_m)$-th block, {\it i.e.,} $\bm{Q}_{i,m}^{(j_m,z_m)}$, given by 
    \begin{subequations}\label{Qfunctions}
    \begin{align}
    \boldsymbol{Q}^{(j_1,z_1)}_{i,1}=\left\{
    \begin{aligned}
    \boldsymbol{I}_{N_r},&\ \text{  if } j_1=i \text{ and } z_1 =   \left\lfloor\frac{i+1}{2}\right\rfloor,\\\
    \boldsymbol{0}_{N_r},&\ \text{ otherwise, }\\
    \end{aligned}
    \right.\\
    \boldsymbol{Q}^{(j_2,z_2)}_{i,2}=\left\{
    \begin{aligned}
    \boldsymbol{I}_{N_t},&\ \text{  if } j_2=1+\left\lfloor\frac{i}{2}\right\rfloor \text{ and } z_2 = i, \\
    \boldsymbol{0}_{N_t},&\ \text{ otherwise, }\\
    \end{aligned}
    \right.
    \end{align}
    \end{subequations}
    where $1\leq i,j_1, z_2 \leq N_s$ and $1\leq z_1, j_2 \leq \frac{N_s+1}{2}$.
The virtual channel matrix $\bm{H}_v$ is given by
\begin{equation}
\begin{aligned}
&\boldsymbol{H}_v\triangleq\sum_{k=0}^{K}l_k \left({\boldsymbol{\xi}}(\tau_k)\otimes\boldsymbol{\alpha}(\theta_{\mathrm{Rx},k})\right)\left({\boldsymbol{\xi}}(-\tau_k)\otimes\boldsymbol{ \beta}\left(\theta_{\mathrm{Tx},k}\right)\right)^{\mathrm{H}},
\end{aligned}\label{virtualchannel}
\end{equation}
where \mbox{$l_k \triangleq \frac{(N_s+1)\gamma_k}{2}$} and 
${\boldsymbol{\xi}}(\tau)\triangleq \bm{a}_{\frac{N_s+1}{2}}(\frac{\tau}{N_s T_s})$.
Next, we list key properties of the matrix $\bm{H}_v$ defined in Equation~\eqref{virtualchannel}:
\begin{itemize}
    \item[O1)] ${\boldsymbol{H}}_v$ has rank $K+1$ provided that that $K+1 \leq \min(N_r, N_t)< \min\left(\frac{(N_s+1)N_r}{2}, \frac{(N_s+1)N_r}{2}\right)$;
    \item[O2)] ${\boldsymbol{H}}_v$ has the same rank as $\bm{H}^{(n)}$, for any $n$; 
    \item[O3)] ${\boldsymbol{H}}_v$ is a block Hankel matrix, {\it i.e.}, the ($i,j$)-th $N_r \times N_t$ block matrix ${\boldsymbol{H}}_v^{(i,j)}$ of ${\boldsymbol{H}}_v$ satisfies ${\boldsymbol{H}}_v^{(i,j)}={\boldsymbol{H}}_v^{(k,z)}$ if $i+j=k+z$ for any $1\leq i,j,k,z\leq \frac{N_s+1}{2} $;
    \item[O4)] ${\boldsymbol{H}}_v^{(i,j)}={\boldsymbol{H}}^{(i+j-2)}$ holds for any $1\leq i,j\leq \frac{N_s+1}{2} $, defining an automorphism $g$ between ${\boldsymbol{H}}_v$ and $\boldsymbol{H}$, {\it i.e.,} 
    \begin{equation}
    \boldsymbol{H}={g}(\boldsymbol{H}_v) \triangleq \sum_{i=1}^{N_s}\boldsymbol{Q}_{i,1} \boldsymbol{H}_v \boldsymbol{Q}_{i,2}.
    \end{equation}
\end{itemize}
Note that, there may exist other possible constructions of a virtual channel matrix capturing the low-rank property of the MIMO OFDM channel. However, our construction is compact, and  symmetrically captures the receiver and the transmitter parameters. This representation will be shown to yield near-optimal reconstruction in simulation studies (see Section~\ref{sec:sim}).

\section{Atomic Norm Minimization Based {Channel, Location, and Orientation} Estimation}\label{sec:method} 


In this section, we first propose a tractable optimization method to estimate the virtual channel matrix as well as the individual channel parameters based on ANM. The optimality and uniqueness of the channel estimates are established for the noiseless case in Propositions~\ref{prop:uniqueandoptimal} and \ref{prop:exactRecovery}. Furthermore, the expected error rate (EER) on the parameters of the atomic norm denoiser is derived under a white Gaussian noise assumption in Proposition \ref{errorboundG}. Finally, given the mappings in Equations~(\ref{tau0})-(\ref{thetarxk}) between $\boldsymbol{{\eta}}\triangleq \left\{ \{\tau_0,\theta_{\mathrm{Tx},0},\theta_{\mathrm{Rx},0}\},\cdots, \{\tau_K,\theta_{\mathrm{Tx},K},\theta_{\mathrm{Rx},K}\}\right\}$ and $\tilde{\boldsymbol{\eta}}\triangleq\left\{\boldsymbol q, \theta_o, \boldsymbol{s}_1,\boldsymbol{s}_2,\cdots, \boldsymbol{s}_K\right\}$, we estimate the orientation and location of the target via a non-linear weighted least squares problem based on the EXIP~\cite{Stoica}, which is compatible with our proposed ANM based channel estimator. The proposed strategy is denoted as  \textit{Location-Orientation-Channel estimation via Multi-dimensional Atomic Norm} (LOCMAN).

\subsection{Exact Recovery in Noiseless Settings}\label{sec:channelestimation}

The virtual channel matrix $\bm{H}_v$ in Equation~\eqref{virtualchannel} is a sparse summation of rank-one matrices lying in the atomic set $\bm{\mathcal{A}}$ defined by 
\begin{equation}
\begin{aligned}\label{eq:atomicSet}
&\bigg\{ \boldsymbol{A}\left(\tau,\theta_{\mathrm{Rx}}, \theta_{\mathrm{Tx}},\phi\right) \triangleq e^{j\phi}\boldsymbol{\chi}(\tau,\theta_{\mathrm{Rx}})\boldsymbol{\zeta}(\tau,\theta_{\mathrm{Tx}})^{\mathrm{H}}\mid \phi\in[0,2\pi),\\
&\quad\frac{d \sin \left(\theta_{\mathrm{Rx}}\right)}{\lambda_{c}}, {\frac{d \sin \left(\theta_{\mathrm{Tx}}\right)}{\lambda_{c}} }\in(-\frac{1}{2},\frac{1}{2}],\frac{\tau}{N_sT_s } \in(0,1]\bigg\},
\end{aligned}
\end{equation}
where ${{\boldsymbol{\chi}}}(\tau,\theta_{\mathrm{}})\triangleq{\boldsymbol{\xi}}(\tau)\otimes\boldsymbol{ \alpha}\left(\theta_{\mathrm{}}\right)
$ and ${{\boldsymbol{\zeta}}}(\tau,\theta_{\mathrm{}})\triangleq{\boldsymbol{\xi}}(-\tau)\otimes\boldsymbol{ \beta}\left(\theta_{\mathrm{}}\right).$ 
The \emph{atomic norm} of $\bm{H}_v$, denoted as $\left\Vert \bm{H}_v \right\Vert_{\mathcal{A}}$, is defined as the Minkowski functional associated {with} $\bm{\mathcal{A}}$ and is given by
\begin{align}
\left\Vert \bm{H}_v \right\Vert_{\mathcal{A}} &= \inf_{t>0} \{ \bm{H}_v \in t \operatorname{conv}(\bm{\mathcal{A}}) \} \nonumber \\
&\hspace{-.3in}=\inf\left\{\sum_{k}\left|{\tilde{l}_{k}}\right| \mid \boldsymbol{H}_v=\sum_{k} \left|\tilde{l}_{k}\right| \boldsymbol{A}\left(\tau_k,\theta_{\mathrm{Rx},k}, \theta_{\mathrm{Tx},k},\phi_k\right)\right\}\label{defatomicnorm}
\end{align}
The theory of atomic norm minimization \cite{chi2020harnessing} enables the estimation of parameters in the sparse decomposition of $\bm{H}_v$ via the \emph{atomic decomposition} $ \{\tau_k,\theta_{\mathrm{Rx},k}, \theta_{\mathrm{Tx},k},\phi_k\}$ realizing the infimum of Equation~\eqref{defatomicnorm}.

In absence of noise, atomic norm minimization consists of  estimating the virtual channel matrix by the matrix with minimal atomic norm which is consistent with the observation model in Equation~\eqref{signalmodelstacked}, yielding the convex program
\begin{equation}
\begin{aligned}
\boldsymbol{\hat{{H}}}_v = \mathop{\arg\min}_{\boldsymbol{H}_v}\quad & {\|\boldsymbol{H}_v\|}_{\mathcal{A}} \\
\text {s.t.} \quad & \boldsymbol{Y}=g(\boldsymbol{H}_v)\boldsymbol{X}.
\end{aligned}\label{optprobnoiseless}
\end{equation}
Of crucial interest is to understand when exact recovery is achieved \emph{i.e.} when $\hat{\bm{H}}_v = \bm{H}_v$. This property is well-understood to be related to the existence of a dual solution {satisfying} certain interpolation properties ( \textit{i.e.} the dual certificate).
The following proposition provides the conditions to ensure the optimality and uniqueness of the solution of Equation~(\ref{optprobnoiseless}).
\begin{proposition}[Dual certificate]\label{prop:uniqueandoptimal} Suppose that there exists a matrix $\bm{\Lambda}$ such that the function
\begin{equation}
\Pi(\tau,\theta_\mathrm{Rx},\theta_\mathrm{Tx})\triangleq \left\langle\sum_{i=1}^{{N_s}{}}\boldsymbol{Q}_{i,1}^\mathrm{H} {\boldsymbol{\Lambda}}{\boldsymbol{X}}^\mathrm{H} \boldsymbol{Q}_{i,2}^\mathrm{H},\boldsymbol{A}(\tau,\theta_\mathrm{Rx},\theta_\mathrm{Tx},0)\right\rangle_R,
\label{dualpoly}
\end{equation}
satisfies the interpolation conditions
\begin{equation}
    \left\{
\begin{aligned}
&\Pi\left(\tau_k,\theta_\mathrm{Rx,k},\theta_\mathrm{Tx,k}\right)=\operatorname{sign}(l_k), \ \ \  \text{if }\left\{\tau_k,\theta_\mathrm{Rx,k},\theta_\mathrm{Tx,k}\right\} \in \bm{\eta};\\
&\left|\Pi(\tau,\theta_\mathrm{Rx},\theta_\mathrm{Tx})\right|<1, \quad\quad\quad\quad\quad  \text{otherwise},
\end{aligned}\label{sucondnoiseless}
\right.
\end{equation}
\end{proposition}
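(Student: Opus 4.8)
The plan is to recognize this as the standard \emph{dual certificate} (sufficiency) argument for atomic norm minimization, specialized to the linear measurement map $\mathcal{L}(\bm{H}_v)\triangleq g(\bm{H}_v)\bm{X}$ of Equation~\eqref{optprobnoiseless}, whose intended conclusion is that the true $\bm{H}_v$ of Equation~\eqref{virtualchannel} is the unique minimizer. The first step is to compute the adjoint $\mathcal{L}^{*}$ with respect to the real inner product $\langle\cdot,\cdot\rangle_R$. Since $g(\bm{H}_v)=\sum_{i=1}^{N_s}\bm{Q}_{i,1}\bm{H}_v\bm{Q}_{i,2}$ is linear, transposing the $\bm{Q}_{i,m}$ and $\bm{X}$ across the pairing gives $\mathcal{L}^{*}(\bm{\Lambda})=\sum_{i=1}^{N_s}\bm{Q}_{i,1}^{\mathrm{H}}\bm{\Lambda}\bm{X}^{\mathrm{H}}\bm{Q}_{i,2}^{\mathrm{H}}$, which is exactly the matrix appearing inside $\Pi$ in Equation~\eqref{dualpoly}. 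Thus $\Pi$ is the pairing of $\mathcal{L}^{*}(\bm{\Lambda})$ against the zero-phase atom, and taking the supremum over the phase variable $\phi$ in the atomic set $\bm{\mathcal{A}}$ of Equation~\eqref{eq:atomicSet} turns that real pairing into a modulus; the off-support bound $|\Pi|<1$ together with the boundary equalities of Equation~\eqref{sucondnoiseless} is therefore equivalent to the dual-feasibility statement $\|\mathcal{L}^{*}(\bm{\Lambda})\|_{\mathcal{A}}^{*}=1$.

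Given this, the second step establishes optimality by weak duality. For any $\bm{H}'$ feasible for Equation~\eqref{optprobnoiseless}, i.e. $g(\bm{H}')\bm{X}=\bm{Y}=g(\bm{H}_v)\bm{X}$, the generalized H\"older inequality for a norm and its dual yields $\langle\mathcal{L}^{*}(\bm{\Lambda}),\bm{H}'\rangle_R\le\|\mathcal{L}^{*}(\bm{\Lambda})\|_{\mathcal{A}}^{*}\,\|\bm{H}'\|_{\mathcal{A}}\le\|\bm{H}'\|_{\mathcal{A}}$. The adjoint identity gives $\langle\mathcal{L}^{*}(\bm{\Lambda}),\bm{H}'\rangle_R=\langle\bm{\Lambda},\bm{Y}\rangle_R=\langle\mathcal{L}^{*}(\bm{\Lambda}),\bm{H}_v\rangle_R$, so it remains to show the right-hand side equals $\|\bm{H}_v\|_{\mathcal{A}}$. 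Expanding $\bm{H}_v$ along its atomic decomposition and writing $l_k=|l_k|\operatorname{sign}(l_k)$, the first interpolation condition $\Pi(\tau_k,\theta_{\mathrm{Rx},k},\theta_{\mathrm{Tx},k})=\operatorname{sign}(l_k)$ aligns the certificate with the phase of each coefficient, so every atom contributes exactly $|l_k|$ and $\langle\mathcal{L}^{*}(\bm{\Lambda}),\bm{H}_v\rangle_R=\sum_k|l_k|=\|\bm{H}_v\|_{\mathcal{A}}$ (the decomposition being norm-attaining under the rank hypothesis O1). Chaining the inequalities gives $\|\bm{H}'\|_{\mathcal{A}}\ge\|\bm{H}_v\|_{\mathcal{A}}$, so $\bm{H}_v$ is a minimizer.

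For uniqueness I would argue from the \emph{strict} bound $|\Pi|<1$ off the support $\bm{\eta}$. If $\bm{H}'$ is another minimizer, forcing equality throughout the H\"older chain requires its atomic decomposition to be supported only where $|\Pi|=1$, i.e. on the true triples $\{\tau_k,\theta_{\mathrm{Rx},k},\theta_{\mathrm{Tx},k}\}$; any mass placed elsewhere would make the pairing strictly smaller than the norm and contradict optimality. It then remains to show the coefficients are pinned down, i.e. that the support atoms $\bm{\chi}(\tau_k,\theta_{\mathrm{Rx},k})\bm{\zeta}(\tau_k,\theta_{\mathrm{Tx},k})^{\mathrm{H}}$ are linearly independent, so that a matrix supported on them with the prescribed image under $\mathcal{L}$ is unique. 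This independence follows from the distinctness of the parameters through the Kronecker/Vandermonde structure of $\bm{\xi}(\tau)\otimes\bm{\alpha}(\theta_{\mathrm{Rx}})$ and $\bm{\xi}(-\tau)\otimes\bm{\beta}(\theta_{\mathrm{Tx}})$, which is the same mechanism underlying the rank count O1.

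The main obstacle I anticipate is the uniqueness step rather than optimality. Two points need care: first, promoting the weak-duality inequality to the conclusion that any alternate optimizer is \emph{exactly} supported on $\bm{\eta}$ must simultaneously control the continuum of candidate parameter triples and the phase variable $\phi$, so that no diffuse or off-grid mass can evade the strict bound; second, the linear-independence argument for these multi-dimensional atoms does not reduce to a single Vandermonde determinant as in the one-dimensional settings of \cite{Beygi,Elnakeeb,Li}, and must exploit the tensor structure of the steering vectors together with the assumption $K+1\le\min(N_r,N_t)$. As is typical for such certificates, the existence of $\bm{\Lambda}$ is taken as a hypothesis, so no explicit construction of the dual certificate is required here.
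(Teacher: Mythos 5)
Your proposal follows essentially the same route as the paper's Appendix A: interpret the interpolation conditions as dual feasibility of $\bm{\Lambda}$ for the Lagrange dual of Equation~\eqref{optprobnoiseless}, chain H\"older's inequality with the on-support alignment $\operatorname{Re}\left(l_k^{*}\Pi(\tau_k,\theta_{\mathrm{Rx},k},\theta_{\mathrm{Tx},k})\right)=|l_k|$ to certify optimality, and invoke the strict off-support bound $|\Pi|<1$ for uniqueness. The only divergence is that you explicitly flag the need for linear independence of the on-support atoms to pin down the coefficients of a would-be alternate minimizer supported entirely on $\bm{\eta}$ --- a case the paper's contradiction argument in Equation~\eqref{uniqueness} silently omits, since its strict inequality only bites when the alternate solution carries off-support mass.
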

then the solution $\hat{{\boldsymbol{H}}}_v$ of Equation~(\ref{optprobnoiseless}) is unique and $\hat{{\boldsymbol{H}}}_v={\boldsymbol{H}}_v$.

\begin{proof}
See Appendix A.
\end{proof}
The function $\Pi(\tau,\theta_\mathrm{Rx},\theta_\mathrm{Tx})$ defines a trigonometric polynomial in three variables. Extending the analysis of the duality in~\cite{TangG}, Equation~(\ref{sucondnoiseless}) enforces that $\Pi(\tau,\theta_\mathrm{Rx},\theta_\mathrm{Tx})$ achieves modulus $1$ when $\left\{\tau,\theta_\mathrm{Rx},\theta_\mathrm{Tx}\right\}\in \bm{\eta}$, which will be further discussed in Section \ref{sec:validationProposition}. We recall from \cite{Tsai,Yang1} the conditions under which a dual certificate satisfying the hypothesis of Proposition~\ref{prop:uniqueandoptimal} exists, hence the tightness of ANM~\cite{da2018tight}.
\begin{proposition}[Exact recovery \cite{Tsai}]\label{prop:exactRecovery}
Let $\Delta_{\min}(\kappa)\triangleq\min_{i\neq j}\min(|\kappa_i-\kappa_j|,1-|\kappa_i-\kappa_j|)$. Given the conditions:\\
C1) $N_r, N_t\geq257$ and $N_s\geq513$;\\
C2) The separation conditions $\Delta_{\min}\left(\frac{d \sin \left(\theta_{\mathrm{Rx}}\right)}{\lambda_c}\right)\geq\frac{1}{\lfloor\frac{N_r-1}{4}\rfloor}$, $\Delta_{\min}\left(\frac{d \sin \left(\theta_{\mathrm{Tx}}\right)}{\lambda_c}\right)\geq\frac{1}{\lfloor\frac{N_t-1}{4}\rfloor}$, $\Delta_{\min}(\frac{\tau}{NT_s})\geq\frac{1}{\lfloor\frac{N_s-1}{8}\rfloor}$ hold;\\
the solution $\hat{{\boldsymbol{H}}}_v$ of Equation~(\ref{optprobnoiseless}) is unique and $\hat{{\boldsymbol{H}}}_v={\boldsymbol{H}}_v$.
\end{proposition}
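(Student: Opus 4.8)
The plan is to prove Proposition~\ref{prop:exactRecovery} by exhibiting a matrix $\bm{\Lambda}$ that satisfies the hypotheses of Proposition~\ref{prop:uniqueandoptimal}; once such a dual certificate is produced, uniqueness and exactness $\hat{\bm{H}}_v=\bm{H}_v$ follow immediately. Concretely, I would first rewrite the dual polynomial $\Pi(\tau,\theta_\mathrm{Rx},\theta_\mathrm{Tx})$ of Equation~\eqref{dualpoly} in the three normalized frequency variables $f_\tau\triangleq\frac{\tau}{N_sT_s}$, $f_\mathrm{Rx}\triangleq\frac{d\sin\theta_\mathrm{Rx}}{\lambda_c}$, and $f_\mathrm{Tx}\triangleq\frac{d\sin\theta_\mathrm{Tx}}{\lambda_c}$. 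Because each atom $\bm{A}$ is an outer product of Kronecker-structured Fourier vectors, $\Pi$ is a real trigonometric polynomial in $(f_\tau,f_\mathrm{Rx},f_\mathrm{Tx})$ whose coefficients are controlled through $\bm{\Lambda}$, subject to the range constraint imposed by $\bm{X}$ and the maps $\bm{Q}_{i,m}$. The task then reduces to constructing such a polynomial that equals $\operatorname{sign}(l_k)$ at every support triple in $\bm{\eta}$ and has modulus strictly below $1$ everywhere else, which is exactly the specialization of the construction of \cite{Tsai} to the present atomic set.

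For the construction I would follow the interpolation scheme of Cand\`es--Fern\'andez-Granda \cite{candes2014towards}, as extended to several dimensions in \cite{chi2014compressive,Tsai}. The idea is to place at each support point a localized bump built from a squared Fej\'er-type kernel, taken as a tensor product across the three frequency axes, and to superpose these bumps with coefficient vectors chosen so that $\Pi$ attains the prescribed unit-modulus signs while its gradient vanishes at every support point. The unknown coefficients solve a linear system whose matrix is a perturbation of the identity; under the separation condition C2, the off-diagonal blocks---governed by the kernel and its derivatives evaluated at inter-point distances at least $\Delta_{\min}$---are small, so the system is invertible by a Neumann-series / Schur-complement argument and the coefficients are well controlled.

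It then remains to certify $|\Pi|<1$ off the support. I would split the frequency domain into a near region, a small polydisc around each support point, and a far region. On the near region the vanishing-gradient condition together with a strict negative-definiteness estimate for the Hessian of $|\Pi|^2$ forces a strict local maximum of value $1$ at each support point, hence $|\Pi|<1$ nearby; on the far region the rapid decay of the squared Fej\'er kernel and its derivatives, summed over the well-separated support points, keeps $|\Pi|$ bounded away from $1$. The dimension thresholds in C1 ($N_r,N_t\geq257$, $N_s\geq513$) and the explicit separation constants in C2 are precisely what make the perturbation bounds and the Hessian/decay estimates close with the stated numerical margins.

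The main obstacle, and the feature that separates this problem from a generic three-dimensional line-spectrum instance, is the coupling of the delay variable $\tau$ across both Kronecker factors: the left factor $\bm{\chi}$ carries $\bm{\xi}(\tau)$ while the right factor $\bm{\zeta}$ carries $\bm{\xi}(-\tau)$. In the inner product of Equation~\eqref{dualpoly} the $f_\tau$-frequencies of the two factors add, so $\Pi$ has degree up to $N_s-1$ along the delay axis, yet its coefficients inherit a Hankel-type convolutional structure and are not freely assignable; the effective number of resolving degrees of freedom is governed by the length $\frac{N_s+1}{2}$ of $\bm{\xi}$ rather than by the full degree. I would therefore need to verify that the tensor-product kernel construction, which in the unconstrained case resolves points separated by roughly $4/(N_s-1)$, remains valid under this convolutional constraint, and to track how the constraint degrades the achievable resolution by a factor of two---which is what yields the factor of $8$ in $\Delta_{\min}(\frac{\tau}{N_sT_s})\geq\frac{1}{\lfloor(N_s-1)/8\rfloor}$ as opposed to the factor of $4$ governing the two angular axes, where $\bm{\alpha}$ and $\bm{\beta}$ each appear in a single factor. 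Reconciling this sign-reversed, convolutionally-coupled delay direction with the otherwise standard dual-certificate machinery is the crux of the argument; the remainder is dimensional bookkeeping.
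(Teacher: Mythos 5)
The paper offers no proof of this proposition of its own: the statement is imported verbatim from \cite{Tsai}, and the paper's entire argument is the remark following it, namely that under C1--C2 a dual certificate satisfying the interpolation conditions of Proposition~\ref{prop:uniqueandoptimal} exists, whence uniqueness and $\hat{\bm{H}}_v=\bm{H}_v$ follow. Your sketch takes essentially that same route (squared-Fej\'er tensor-kernel certificate in the style of \cite{candes2014towards,chi2014compressive,Tsai}, then Proposition~\ref{prop:uniqueandoptimal}), and your reading of the factor $8$ is the right structural point --- the delay steering vectors $\bm{\xi}(\pm\tau)$ have length $\frac{N_s+1}{2}$, so $\bigl\lfloor\frac{(N_s+1)/2-1}{4}\bigr\rfloor=\bigl\lfloor\frac{N_s-1}{8}\bigr\rfloor$ --- though the realizability of the certificate through the adjoint of $\bm{H}_v\mapsto g(\bm{H}_v)\bm{X}$ (the role of $\bm{X}$ and $G$) is asserted rather than verified in your outline, exactly as it is in the paper.
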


{Since Equation~\eqref{sucondnoiseless} holds under conditions C1 and C2 in Proposition~\ref{prop:exactRecovery}, the optimality and the uniqueness of the estimate can be ensured via Proposition~\ref{prop:uniqueandoptimal}. Essentially, the condition C1 is a sufficient condition to guarantee the tightness of the atomic norm; however, this condition has been empirically shown to be unnecessary in practice \cite{Tsai,Yang1}. The condition C2 ensures that the bandwidth and the number of transmit and receive antennas is sufficiently large to ensure that the TOAs, AODs, and AOAs of the $K+1$ paths, respectively, are sufficiently separated.}

\subsection{Error Rate in Noisy Scenario}\label{sec:errorRate}
In practice, the observations $\bm{Y}$ are corrupted by noise. The atomic norm denoiser is a convex estimator that {balances} between the atomic norm of the solution and its deviation from the noiseless measurement model in Equation~\eqref{signalmodelstacked}. In our context, {the} atomic norm denoiser of the virtual matrix $\bm{H}_v$ {is given by} 
\begin{equation}
\boldsymbol{\hat{{H}}}_v=\mathop{\arg\min}_{\boldsymbol{H}_v}\quad {\epsilon}{\|\boldsymbol{H}_v\|}_\mathcal{A} +\frac{1}{2}{\|\boldsymbol{Y}-g(\boldsymbol{H}_v)\boldsymbol{X}\|}^2_F,\label{optprobnoisy}
\end{equation}
where $\epsilon$ is a regularization parameter whose value can be selected according to the next proposition to guarantee near-optimal denoising rates.

\begin{proposition}[Near-optimal denoising rate]\label{errorboundG}
Given the independent, zero mean, complex Gaussian distributed noise, {\it i.e.,} $\boldsymbol{W}[i,j] \sim \mathcal{CN}({0},\sigma^2)$, if we set $\epsilon$ to
\begingroup
\allowdisplaybreaks[0]
\begin{align}
\epsilon={}&\frac{2\sigma\sqrt{\sum_{n=1}^{N_s}\sum_{g=1}^{G}\left(\sum_{t=1}^{N_t}\boldsymbol{X}[(n-1)N_t+t,g]\right)^2}}{(N_s+1)\sqrt{N_t}} \nonumber \\ 
&\;\times\sqrt{\log\left(2\pi(N_s+N_r+N_t)\log(N_s+N_r+N_t)\right)+ 1} \nonumber\\
&\;\times\left(1+\frac{1}{\log(N_s+N_r+N_t)}\right),
\label{epsilon}
\end{align}
\endgroup
the following upper bound on the EER holds on the solution $\hat{\bm{H}}_v$ of~\eqref{optprobnoisy},
\begin{equation}
    \mathbb{E}\left\{\left\|\left(g(\hat{\boldsymbol{H}}_v)-g({\boldsymbol{H}}_v)\right)\boldsymbol{X}\right\|^{2}_{F}\right\}\leq 2\epsilon\left\|\boldsymbol{H}_v\right\|_\mathcal{A}.\label{expectederrorrate}
\end{equation}
\end{proposition}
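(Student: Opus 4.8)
The plan is to follow the standard oracle-inequality strategy for atomic-norm denoising, adapted to the linear measurement operator $\mathcal{G}(\boldsymbol{H}_v)\triangleq g(\boldsymbol{H}_v)\boldsymbol{X}$ (with adjoint $\mathcal{G}^{\ast}$) and to the block structure of $g$. First I would exploit the variational characterization of the minimizer of~\eqref{optprobnoisy}: since $\hat{\boldsymbol{H}}_v$ minimizes the sum of $\epsilon\|\cdot\|_{\mathcal{A}}$ and the smooth quadratic, the first-order optimality condition reads $\langle \mathcal{G}^{\ast}(\boldsymbol{Y}-\mathcal{G}(\hat{\boldsymbol{H}}_v)),\,\boldsymbol{Z}-\hat{\boldsymbol{H}}_v\rangle_R \leq \epsilon(\|\boldsymbol{Z}\|_{\mathcal{A}}-\|\hat{\boldsymbol{H}}_v\|_{\mathcal{A}})$ for all $\boldsymbol{Z}$. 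Writing $\boldsymbol{D}\triangleq\hat{\boldsymbol{H}}_v-\boldsymbol{H}_v$, using $\boldsymbol{Y}-\mathcal{G}(\hat{\boldsymbol{H}}_v)=\boldsymbol{W}-\mathcal{G}(\boldsymbol{D})$ and the linearity of $g$, and evaluating at $\boldsymbol{Z}=\boldsymbol{H}_v$ (the true matrix) yields the deterministic inequality $\|\mathcal{G}(\boldsymbol{D})\|_F^2 \leq \langle \boldsymbol{W},\mathcal{G}(\boldsymbol{D})\rangle_R + \epsilon(\|\boldsymbol{H}_v\|_{\mathcal{A}}-\|\hat{\boldsymbol{H}}_v\|_{\mathcal{A}})$, which already isolates $\|\mathcal{G}(\boldsymbol{D})\|_F^2=\|(g(\hat{\boldsymbol{H}}_v)-g(\boldsymbol{H}_v))\boldsymbol{X}\|_F^2$, the quantity appearing in~\eqref{expectederrorrate}.

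The second step passes the noise term through $\mathcal{G}^{\ast}$ and invokes duality. By $\operatorname{Tr}$-cyclicity, exactly as in the derivation of~\eqref{dualpoly}, one gets $\langle\boldsymbol{W},\mathcal{G}(\boldsymbol{D})\rangle_R=\langle\boldsymbol{V},\boldsymbol{D}\rangle_R$ with $\boldsymbol{V}\triangleq\sum_{i=1}^{N_s}\boldsymbol{Q}_{i,1}^{\mathrm{H}}\boldsymbol{W}\boldsymbol{X}^{\mathrm{H}}\boldsymbol{Q}_{i,2}^{\mathrm{H}}$, i.e.\ the very matrix generating the dual polynomial of Proposition~\ref{prop:uniqueandoptimal} when $\boldsymbol{\Lambda}=\boldsymbol{W}$. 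The atomic-norm duality inequality gives $\langle\boldsymbol{V},\boldsymbol{D}\rangle_R\leq\|\boldsymbol{V}\|_{\mathcal{A}}^{\ast}\|\boldsymbol{D}\|_{\mathcal{A}}$, where the dual atomic norm $\|\boldsymbol{V}\|_{\mathcal{A}}^{\ast}=\sup_{\boldsymbol{A}\in\boldsymbol{\mathcal{A}}}\langle\boldsymbol{V},\boldsymbol{A}\rangle_R$ is precisely the supremum over the parameter domain of the modulus of the noise dual polynomial $\Pi$. Combining with the triangle inequality $\|\boldsymbol{D}\|_{\mathcal{A}}\leq\|\hat{\boldsymbol{H}}_v\|_{\mathcal{A}}+\|\boldsymbol{H}_v\|_{\mathcal{A}}$ and the first-step bound produces $\|\mathcal{G}(\boldsymbol{D})\|_F^2\leq(\|\boldsymbol{V}\|_{\mathcal{A}}^{\ast}-\epsilon)\|\hat{\boldsymbol{H}}_v\|_{\mathcal{A}}+(\|\boldsymbol{V}\|_{\mathcal{A}}^{\ast}+\epsilon)\|\boldsymbol{H}_v\|_{\mathcal{A}}$; hence on the event $\{\|\boldsymbol{V}\|_{\mathcal{A}}^{\ast}\leq\epsilon\}$ the first term is non-positive and one recovers the target factor-two bound $\|\mathcal{G}(\boldsymbol{D})\|_F^2\leq 2\epsilon\|\boldsymbol{H}_v\|_{\mathcal{A}}$.

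It remains to show that the prescribed $\epsilon$ in~\eqref{epsilon} dominates $\|\boldsymbol{V}\|_{\mathcal{A}}^{\ast}$ in the sense needed for the expectation in~\eqref{expectederrorrate}, and this is where the real work lies. Since $\langle\boldsymbol{V},\boldsymbol{A}(\tau,\theta_\mathrm{Rx},\theta_\mathrm{Tx},0)\rangle_R=\langle\boldsymbol{W},g(\boldsymbol{A})\boldsymbol{X}\rangle_R$ is a real-linear functional of the Gaussian $\boldsymbol{W}$, it is a mean-zero Gaussian field indexed by $(\tau,\theta_\mathrm{Rx},\theta_\mathrm{Tx})$ with pointwise variance $\tfrac{\sigma^2}{2}\|g(\boldsymbol{A})\boldsymbol{X}\|_F^2$, and $\|\boldsymbol{V}\|_{\mathcal{A}}^{\ast}$ is the supremum of its modulus. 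I would bound $\mathbb{E}\|\boldsymbol{V}\|_{\mathcal{A}}^{\ast}$ by a covering/Gaussian-maximal-inequality argument: the worst-case variance over atoms, using $\|\boldsymbol{\alpha}\|_2=\|\boldsymbol{\xi}\|_2=1$ and the block-diagonal form of $g(\boldsymbol{A})$, reduces to $\max_{\theta_\mathrm{Tx}}\sum_{n,g}|\boldsymbol{\beta}(\theta_\mathrm{Tx})^{\mathrm{H}}\boldsymbol{x}^{(g,n)}|^2$ and produces the prefactor $\tfrac{2\sigma}{(N_s+1)\sqrt{N_t}}\sqrt{\sum_{n}\sum_{g}(\sum_t\boldsymbol{X}[(n-1)N_t+t,g])^2}$ of~\eqref{epsilon}; one then discretizes $(\tau,\theta_\mathrm{Rx},\theta_\mathrm{Tx})$ on a net of cardinality polynomial in $N_s,N_r,N_t$, controls the maximum over the net by the Gaussian maximal inequality (the $\sqrt{\log(\cdots)}$ factor), and controls the oscillation between net points by a Bernstein--Szeg\H{o} bound on the gradient of the trigonometric polynomial, which together with the passage from a high-probability to an expectation statement accounts for the inflation factors $\sqrt{\,\cdot\,+1}$ and $(1+1/\log(N_s+N_r+N_t))$.

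The principal obstacle is this last step: unlike the finite-dimensional LASSO setting, $\|\boldsymbol{V}\|_{\mathcal{A}}^{\ast}$ is the supremum of a Gaussian process over a \emph{continuum} of atoms in three variables, so making the net/chaining argument quantitative—pinning down the exact constants and net size that yield the closed-form $\epsilon$—is delicate, as is converting the event-wise bound on $\{\|\boldsymbol{V}\|_{\mathcal{A}}^{\ast}\leq\epsilon\}$ into the unconditional expectation~\eqref{expectederrorrate}. This is precisely why $\epsilon$ is taken slightly larger than $\mathbb{E}\|\boldsymbol{V}\|_{\mathcal{A}}^{\ast}$: the contribution of the complementary low-probability event must be absorbed, which I would handle through the a priori bound $\|\hat{\boldsymbol{H}}_v\|_{\mathcal{A}}\leq\|\boldsymbol{H}_v\|_{\mathcal{A}}+\tfrac{1}{2\epsilon}\|\boldsymbol{W}\|_F^2$ obtained by comparing the objective value at $\hat{\boldsymbol{H}}_v$ with that at the true $\boldsymbol{H}_v$.
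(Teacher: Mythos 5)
Your proposal follows essentially the same route as the paper's proof: the optimality/subdifferential conditions of the regularized program (the paper's Lemma~1), the H\"older-plus-triangle-inequality reduction to controlling the dual atomic norm of $\sum_{i}\boldsymbol{Q}_{i,1}^{\mathrm{H}}\boldsymbol{W}\boldsymbol{X}^{\mathrm{H}}\boldsymbol{Q}_{i,2}^{\mathrm{H}}$ (Lemma~2), and then a discretization of the three-dimensional parameter domain combined with the multi-dimensional Bernstein inequality and the Gaussian maximal inequality, with the grid size set to $2\pi(N_s+N_r+N_t)\log(N_s+N_r+N_t)$ to produce exactly the prefactor, $\sqrt{\log(\cdot)+1}$, and $(1+1/\log(\cdot))$ terms in Equation~\eqref{epsilon}. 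The only (minor) divergence is in the last step: the paper works directly with the condition $\mathbb{E}\{\|\sum_i\boldsymbol{Q}_{i,1}^{\mathrm{H}}\boldsymbol{W}\boldsymbol{X}^{\mathrm{H}}\boldsymbol{Q}_{i,2}^{\mathrm{H}}\|_{\mathcal{A}}^{*}\}\leq\epsilon$ rather than conditioning on the event $\{\|\cdot\|_{\mathcal{A}}^{*}\leq\epsilon\}$ and absorbing the complementary event as you suggest, so your treatment of that passage is, if anything, more explicit than the paper's.
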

\begin{proof}
See Appendix B.
\end{proof}

The bound in Equation~\eqref{expectederrorrate} is obtained by selecting the smallest $\epsilon$ that is greater than or equal to the expectation of the dual norm of $\sum_{i=1}^{N_s}\boldsymbol{Q}_{i,1}^\mathrm{H}\boldsymbol{W}\boldsymbol{X}^{\mathrm{H}}\boldsymbol{Q}_{i,2}^\mathrm{H}$ (see Equation~(\ref{defdualnorm}) for the definition of the dual norm).
Note that, as the noise $\boldsymbol{W}$ is added {to} $g({\boldsymbol{H}}_v)\boldsymbol{X}$ in our measurement model instead of $\boldsymbol{H}_v$, the theoretical results derived in~\cite{bhaskar2013atomic} cannot be directly applied to our {setting}.

In contrast to~\cite{Tsai}, which selects the regularization parameter $ \epsilon\varpropto\sigma\sqrt{\left(\frac{N_s+1}{2}\right)^2N_rN_t\log\left(\left(\frac{N_s+1}{2}\right)^2N_rN_t\right)}$, Proposition \ref{errorboundG} takes the structure of the virtual channel matrix and the energy of the pilot signal into consideration. Numerical evidence of the benefits of the choice of $\epsilon$ {provided} in Proposition \ref{errorboundG} will be given in Section \ref{sec:validationProposition}. 

\subsection{Semidefinite Approximation of the Atomic Norm}\label{sec:semidefiniteApproximation}

Given a $(2r_1 - 1) \times (2r_2-1)$ matrix $\bm{U}$, we define by $\mathcal{T}_2(\bm{U})$ the $r_1 r_2 \times r_1 r_2$ 2-level block Toeplitz matrix \cite{Yang2} as
\begin{align}
\mathcal{T}_2(\bm{U}) ={}& \begin{bmatrix} \mathcal{T}(\bm{u}_0) & \mathcal{T}(\bm{u}_{1}) & \cdots & \mathcal{T}(\bm{u}_{r_1 - 1}) \\
\mathcal{T}(\bm{u}_{-1}) & \mathcal{T}(\bm{u}_0) & \cdots & \mathcal{T}(\bm{u}_{r_1 - 2}) \\
\vdots & \vdots & \ddots & \vdots \\
\mathcal{T}(\bm{u}_{-r_1 + 1}) & \mathcal{T}(\bm{u}_{-r_1 + 2}) & \cdots & \mathcal{T}(\bm{u}_{0})
\end{bmatrix}
\end{align}
where  $\mathcal{T}(\bm{u})$ is defined for any a $2 r_2 - 1$ vector $\bm{u}$ as the $r_2 \times r_ 2$ Toeplitz matrix with first row $\left[\bm{u}[0], \bm{u}[1],\dots, \bm{u}[r_2 -1]\right]$ and first column $\left[\bm{u}[0], \bm{u}[1],\dots, \bm{u}[-r_2 + 1]\right]^{\mathrm{T}}$.

Although Equations~\eqref{optprobnoiseless} and~\eqref{optprobnoisy} are both convex programs involving multi-dimensional polynomial constraints, computing their optimal solution requires {resolving} a Lasserre hierarchy of semidefinite programs (SDP)~\cite{lasserre2001global}. In fact, it can be shown that the first Lasserre approximation is tight provided that the number of paths is small enough, which is the subject of Proposition~\ref{prop:equivalenceOfTheAtomicNorm}.
\begin{proposition}[Semidefinite representation]\label{prop:equivalenceOfTheAtomicNorm} 
The atomic norm  ${\|\boldsymbol{H}_v \|}_{\mathcal{A}}$ defined in Equation~(\ref{defatomicnorm}) is equivalent to the SDP
\begin{equation}
\begin{aligned}
\min_{\boldsymbol{V},\boldsymbol{U},\boldsymbol{H}_v}\quad{}& \frac{1}{2}\operatorname{Tr}\left(\boldsymbol{J}\right) \\
\text {s.t.}\quad{}&\boldsymbol{J}\triangleq\left[\begin{array}{cc}
\mathcal{T}_2({{\boldsymbol{U}}}) & {\boldsymbol{H}_v} \\
\boldsymbol{H}_v^{\mathrm{H}} & \mathcal{T}_2({{\boldsymbol{V}}})
\end{array}\right] \succeq \mathbf{0},\\
&\boldsymbol{H}_v^{(i,j)}=\boldsymbol{H}_v^{(k,z)},\textrm{ if } i+j=k+z, \forall i, j, k, z,
\end{aligned}\label{atomicnorm}
\end{equation}
provided that $\operatorname{rank}(\mathcal{T}_2(\hat{\boldsymbol{U}}))<\min\left(\frac{N_s+1}{2},N_r\right)$ and $\operatorname{rank}(\mathcal{T}_2(\hat{\boldsymbol{V}}))<\min\left(\frac{N_s+1}{2},N_t\right)$, 
where $\hat{\boldsymbol{U}}$ and $\hat{\boldsymbol{V}}$ are the estimates for ${\boldsymbol{U}}$ and ${\boldsymbol{V}}$, respectively.\label{equivalence}
\end{proposition}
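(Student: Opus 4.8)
The plan is to prove the two inequalities $\mathrm{SDP}(\boldsymbol{H}_v)\le\|\boldsymbol{H}_v\|_{\mathcal{A}}$ and $\mathrm{SDP}(\boldsymbol{H}_v)\ge\|\boldsymbol{H}_v\|_{\mathcal{A}}$ separately, where $\mathrm{SDP}(\boldsymbol{H}_v)$ denotes the optimal value of the program in Equation~\eqref{atomicnorm}; this mirrors the semidefinite characterization of the atomic norm for line spectra, lifted to the matrix, two-dimensional setting. For the upper bound I would start from an arbitrary atomic decomposition $\boldsymbol{H}_v=\sum_k|\tilde{l}_k|\,e^{j\phi_k}\boldsymbol{\chi}(\tau_k,\theta_{\mathrm{Rx},k})\boldsymbol{\zeta}(\tau_k,\theta_{\mathrm{Tx},k})^{\mathrm{H}}$ nearly realizing the infimum in Equation~\eqref{defatomicnorm}, and exhibit a feasible point of Equation~\eqref{atomicnorm} whose objective equals $\sum_k|\tilde{l}_k|$. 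Setting $\mathcal{T}_2(\boldsymbol{U})\triangleq\sum_k|\tilde{l}_k|\boldsymbol{\chi}_k\boldsymbol{\chi}_k^{\mathrm{H}}$ and $\mathcal{T}_2(\boldsymbol{V})\triangleq\sum_k|\tilde{l}_k|\boldsymbol{\zeta}_k\boldsymbol{\zeta}_k^{\mathrm{H}}$, I would verify that each summand is a genuine two-level Toeplitz matrix, since $\boldsymbol{\chi}_k\boldsymbol{\chi}_k^{\mathrm{H}}=(\boldsymbol{\xi}(\tau_k)\boldsymbol{\xi}(\tau_k)^{\mathrm{H}})\otimes(\boldsymbol{\alpha}_k\boldsymbol{\alpha}_k^{\mathrm{H}})$ is a Kronecker product of two rank-one Hermitian Toeplitz matrices, so their sums define admissible $\boldsymbol{U},\boldsymbol{V}$. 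The certificate $\boldsymbol{J}=\sum_k|\tilde{l}_k|\,\boldsymbol{v}_k\boldsymbol{v}_k^{\mathrm{H}}\succeq\boldsymbol{0}$ with $\boldsymbol{v}_k\triangleq[\boldsymbol{\chi}_k^{\mathrm{T}},\,e^{-j\phi_k}\boldsymbol{\zeta}_k^{\mathrm{T}}]^{\mathrm{T}}$ reproduces the three blocks exactly, the block-Hankel constraint is inherited from the atoms, and, because the Fourier vectors $\boldsymbol{a}_N$ are unit-norm so that $\|\boldsymbol{\chi}_k\|_2=\|\boldsymbol{\zeta}_k\|_2=1$, one obtains $\tfrac{1}{2}\operatorname{Tr}(\boldsymbol{J})=\sum_k|\tilde{l}_k|$. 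Infimizing over decompositions then yields $\mathrm{SDP}(\boldsymbol{H}_v)\le\|\boldsymbol{H}_v\|_{\mathcal{A}}$.

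For the reverse inequality I would take any feasible triple $(\boldsymbol{U},\boldsymbol{V},\boldsymbol{H}_v)$; positivity of $\boldsymbol{J}$ forces $\mathcal{T}_2(\boldsymbol{U})\succeq\boldsymbol{0}$ and $\mathcal{T}_2(\boldsymbol{V})\succeq\boldsymbol{0}$. The central device is the Vandermonde decomposition of positive semidefinite two-level Toeplitz matrices~\cite{Yang2}, whose availability is precisely what the rank hypotheses $\operatorname{rank}(\mathcal{T}_2(\hat{\boldsymbol{U}}))<\min(\tfrac{N_s+1}{2},N_r)$ and $\operatorname{rank}(\mathcal{T}_2(\hat{\boldsymbol{V}}))<\min(\tfrac{N_s+1}{2},N_t)$ guarantee: unlike the one-level case, a multilevel Toeplitz matrix admits such a factorization only under a rank bound. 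I would thus write $\mathcal{T}_2(\boldsymbol{U})=\sum_i p_i\boldsymbol{\chi}_i\boldsymbol{\chi}_i^{\mathrm{H}}$ and $\mathcal{T}_2(\boldsymbol{V})=\sum_j q_j\boldsymbol{\zeta}_j\boldsymbol{\zeta}_j^{\mathrm{H}}$ with $p_i,q_j>0$ and unit-norm two-dimensional harmonic factors $\boldsymbol{\chi}_i=\boldsymbol{\xi}(\tau_i)\otimes\boldsymbol{\alpha}(\theta_{\mathrm{Rx},i})$ and $\boldsymbol{\zeta}_j=\boldsymbol{\xi}(-\tilde{\tau}_j)\otimes\boldsymbol{\beta}(\theta_{\mathrm{Tx},j})$, absorbing all normalizations into the weights.

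Because $\boldsymbol{J}\succeq\boldsymbol{0}$, the column spaces satisfy $\operatorname{range}(\boldsymbol{H}_v)\subseteq\operatorname{range}(\mathcal{T}_2(\boldsymbol{U}))=\operatorname{span}\{\boldsymbol{\chi}_i\}$ and $\operatorname{range}(\boldsymbol{H}_v^{\mathrm{H}})\subseteq\operatorname{span}\{\boldsymbol{\zeta}_j\}$, so $\boldsymbol{H}_v=\sum_{i,j}c_{ij}\boldsymbol{\chi}_i\boldsymbol{\zeta}_j^{\mathrm{H}}$. Here I would invoke the block-Hankel constraint of Equation~\eqref{atomicnorm}: a cross term with $\tau_i\neq\tilde{\tau}_j$ contributes the factor $\boldsymbol{\xi}(\tau_i)\boldsymbol{\xi}(-\tilde{\tau}_j)^{\mathrm{H}}$, whose $(m,n)$ entry depends on $m$ and $n$ separately rather than only through $m+n$ and so breaks the Hankel symmetry; linear independence of the distinct harmonics then annihilates every unmatched term and collapses the double sum to $\boldsymbol{H}_v=\sum_k b_k\,\boldsymbol{\chi}(\tau_k,\theta_{\mathrm{Rx},k})\boldsymbol{\zeta}(\tau_k,\theta_{\mathrm{Tx},k})^{\mathrm{H}}$, a sum of bona fide atoms sharing a common delay in their left and right factors. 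Finally, testing $\boldsymbol{J}\succeq\boldsymbol{0}$ against the biorthogonal (dual) vectors of the independent frames $\{\boldsymbol{\chi}_k\}$ and $\{\boldsymbol{\zeta}_k\}$ yields the relation $\left[\begin{smallmatrix} p_k & b_k \\ \overline{b}_k & q_k \end{smallmatrix}\right]\succeq\boldsymbol{0}$, whence $|b_k|\le\sqrt{p_kq_k}\le\tfrac12(p_k+q_k)$ by the arithmetic--geometric mean inequality. Summing and using $\operatorname{Tr}(\mathcal{T}_2(\boldsymbol{U}))=\sum_k p_k$ and $\operatorname{Tr}(\mathcal{T}_2(\boldsymbol{V}))=\sum_k q_k$ gives $\|\boldsymbol{H}_v\|_{\mathcal{A}}\le\sum_k|b_k|\le\tfrac12\operatorname{Tr}(\boldsymbol{J})$, which is the lower bound.

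I expect the main obstacle to be the lower bound, and within it the two coupled points that the one-dimensional argument never encounters. First, legitimizing the Vandermonde factorization in the two-level setting is exactly why the rank hypotheses are imposed and cannot be dropped. Second, the frequency-matching step must use the block-Hankel constraint to force the delay $\tau$ to be common to the left and right harmonic factors, so that the \emph{a priori} double sum over $(i,j)$ reduces to admissible atoms. Additional care is needed when the two Vandermonde decompositions return different numbers of nodes, a mismatch the Hankel-matching must reconcile before the coefficient bound $|b_k|\le\tfrac12(p_k+q_k)$ can be applied.
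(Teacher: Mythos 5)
Your proposal is correct in outline and rests on the same core machinery as the paper's proof: the feasible point assembled from an atomic decomposition (giving $\tfrac{1}{2}\operatorname{Tr}(\boldsymbol{J})=\sum_k|\tilde{l}_k|$) is exactly the content of \cite[Lemma 1]{Tsai}, and the 2-level Vandermonde decomposition of $\mathcal{T}_2(\boldsymbol{U})$ and $\mathcal{T}_2(\boldsymbol{V})$ under the stated rank hypotheses is exactly where the paper invokes \cite{Yang2} and \cite[Theorem 3]{Yang1}. The one structural difference is that the paper never argues on the Hankel-constrained program directly: it sandwiches the value of Equation~\eqref{atomicnorm} between the unconstrained SDP of \cite{Tsai} (Equation~\eqref{SDP}) and $\|\boldsymbol{H}_v\|_{\mathcal{A}}$, and delegates the lower bound to \cite[Lemma 2]{Tsai}, whereas you use the block-Hankel constraint actively to force the delay nodes of the two Vandermonde decompositions to coincide. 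Your route is more self-contained and makes explicit why the constraint $\boldsymbol{H}_v^{(i,j)}=\boldsymbol{H}_v^{(k,z)}$ is what couples $\tau$ across the left and right factors of each atom; the paper's route is shorter but leaves that mechanism buried in the citations. One step of yours needs more care than you give it: the claim that ``linear independence of the distinct harmonics annihilates every unmatched term'' is not immediate, because the Hankel defect of $\sum_{i,j}c_{ij}\boldsymbol{\chi}_i\boldsymbol{\zeta}_j^{\mathrm{H}}$ is a combination of \emph{truncated} outer products $\tilde{\boldsymbol{\chi}}_i\tilde{\boldsymbol{\zeta}}_j^{\mathrm{H}}$ (one block shorter in the delay dimension) weighted by $c_{ij}\bigl(e^{-j2\pi\tau_i/(N_sT_s)}-e^{-j2\pi\tilde{\tau}_j/(N_sT_s)}\bigr)$, and the linear independence of these truncated products does not follow automatically from that of the $\boldsymbol{\chi}_i\boldsymbol{\zeta}_j^{\mathrm{H}}$; establishing it (or circumventing it) is precisely the technical content the paper outsources to \cite[Lemma 2]{Tsai}. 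The remaining steps --- the range inclusions from $\boldsymbol{J}\succeq\boldsymbol{0}$, the compression by biorthogonal frames yielding $|b_k|^2\le p_kq_k$, the arithmetic--geometric mean bound, and the observation that unmatched Vandermonde nodes only add nonnegative mass to the trace --- are all sound.
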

\begin{proof}
Let $\|{\boldsymbol{H}_v}\|$ be the objective value in Equation~(\ref{atomicnorm}) and define $\operatorname{SDP}(\boldsymbol{H}_v)$ as {in}~\cite[Equation (35)]{Tsai} {that is,}
\begin{equation}
\begin{aligned}
\operatorname{SDP}(\boldsymbol{H}_v)\triangleq\min_{\boldsymbol{V},\boldsymbol{U},\boldsymbol{H}_v}\quad{}& \frac{1}{2}\operatorname{Tr}\left(\boldsymbol{J}\right) \\
\text {s.t.}\quad&\boldsymbol{J}\triangleq\left[\begin{array}{cc}
\mathcal{T}_2({{\boldsymbol{U}}}) & {\boldsymbol{H}_v} \\
\boldsymbol{H}_v^{\mathrm{H}} & \mathcal{T}_2({{\boldsymbol{V}}})
\end{array}\right] \succeq \mathbf{0}.
\end{aligned}\label{SDP}
\end{equation}
The inequality $\|\boldsymbol{H}_v\| \geq \operatorname{SDP}(\boldsymbol{H}_v)$ holds based on the definitions of the key quantities. It can be shown from~\cite[Lemma 1]{Tsai} that $\|\boldsymbol{H}_v\|\leq {\|\boldsymbol{H}_v\|}_{\mathcal{A}}$. Furthermore, given $\operatorname{rank}(\mathcal{T}_2(\hat{\boldsymbol{U}}))<\min\left(\frac{N_s+1}{2},N_r\right)$ and $\operatorname{rank}(\mathcal{T}_2(\hat{\boldsymbol{V}}))<\min\left(\frac{N_s+1}{2},N_t\right)$, $\mathcal{T}_2(\hat{\boldsymbol{U}})$ and $\mathcal{T}_2(\hat{\boldsymbol{
V}})$ admit the unique 2-level Vandermonde decomposition~\cite{Yang2}. Then,~\cite[Theorem 3]{Yang1} can be applied to the two-dimensional case, which, combined with~\cite[Lemma 2]{Tsai} yields the equality $\operatorname{SDP}(\boldsymbol{H}_v)={\|\boldsymbol{H}_v\|}_{\mathcal{A}}$, yielding on the desired result. 
\end{proof}

While the 2-level Vandermonde decompositions of the solutions $\mathcal{T}_2(\hat{\boldsymbol{U}})$ and  $\mathcal{T}_2(\hat{\boldsymbol{V}})$ of Equation~\eqref{SDP} need to exist for the SDP equivalence to hold, the rank conditions are simple sufficient conditions to {ensure} their existence. {In addition, we remark that the conditions C1 and C2 in Proposition \ref{prop:exactRecovery} are not necessary for the equivalence of $\operatorname{SDP}(\boldsymbol{H}_v)$ and ${\|\boldsymbol{H}_v\|}_{\mathcal{A}}$ to hold.}  

{Note that, a Toeplitz-Hankel formulation is proposed in \cite{Cho} for the recovery of one-dimensional signals, which is shown to be equivalent to the atomic norm when the Hankel matrix therein admits a Vandermonde decomposition. Though the formulation in \cite{Cho} might be extended for the multi-dimensional case, their extended formulation still does not fit our signal model since $\boldsymbol{H}_v$ is not a Hankel matrix. }

\subsection{Estimation of Individual Channel Parameters}\label{sec:channelParametersEstimation} Proposition \ref{prop:uniqueandoptimal} suggests that it is possible to estimate the TOAs, AOAs, and AODs ($\hat\tau_k$, $\hat\theta_{\mathrm{Tx},k}$, and $\hat\theta_{\mathrm{Rx},k}$) by identifying the values where the modulus of the dual trigonometric polynomial achieves {unity}. For computational efficiency, we estimate these channel parameters from the 2-level Vandermonde decomposition of the solution of Equation~(\ref{optprobnoiseless}) or~(\ref{optprobnoisy}) in the sequel. 
Specifically, the individual channel parameters are estimated via the matrix pencil and pairing (MaPP) algorithm~\cite{Yang2}{. The} estimates corresponding to the same path are paired, which presupposes the total number of the paths to be known by the receiver. If the total number of the paths $K$ is unknown by the receiver, it can be estimated by the rank of the solution $\widehat{\boldsymbol{H}}_v$ of the program in  Equation~(\ref{optprobnoiseless}) or~(\ref{optprobnoisy}) which can be computed regardless of the model order. In practice, it is {sufficient} to count the number of eigenvalues of $\boldsymbol{\hat{{H}}}_v$ that are greater than a threshold $\varrho$ to get a reliable estimate of $K$ before applying the MaPP algorithm.
We assume that the channel coefficient $\gamma_0$ associated to the LOS path is the one with largest modulus when $K$ is to be estimated. The choice of $\varrho$ is empirically discussed in Section \ref{subsec:modelOrder}.

 
\subsection{Localization and Orientation Estimation}\label{sec:EXIP}

Although the estimated location and orientation can be directly estimated from the geometry of the LOS path, more accurate estimates can be achieved by leveraging the geometry of the NLOS paths~\cite{Shahmansoori}. Once the parameter $\boldsymbol{\eta}$, which parametrizes Equation~(\ref{optprobnoisy}) given the channel coefficients $\{\gamma_{0},\gamma_{1},\cdots\,\gamma_{K}\}$, is estimated through the procedure presented in Section \ref{sec:channelParametersEstimation}, the final step consists of recovering the location and orientation parameters from the geometric mapping in Equation~\eqref{eq:geometricMapping}.

Since we make no assumptions on the path loss model in the signal model, knowledge of the channel coefficients do not improve the accuracy of the localization and orientation estimation. In addition, $\{\boldsymbol p,\theta_o,\boldsymbol{s}_1, \boldsymbol{s}_2,\cdots,\boldsymbol{s}_K, \gamma_0,\gamma_1,\cdots,\gamma_K\}$ can be used to re-parametrize the optimization problem in Equation~(\ref{optprobnoisy}). Therefore, we fix the estimated channel coefficients\footnote{We can substitute the estimated $\boldsymbol{\eta}$ into Equation~(\ref{signalmodelstacked}) to achieve a system of linear equations to compute the estimates of channel coefficients \cite{Li}.} and propose a weighted least squares problem to achieve an accurate localization and orientation estimation, with the estimates of all the paths, {\it i.e.,} $\hat{\boldsymbol\eta}$, exploited,
\begin{equation}
{\hat{\tilde{\boldsymbol\eta}}} = \arg\min_{\tilde{\boldsymbol\eta}} \left(\hat{\boldsymbol\eta}-f({\tilde{\boldsymbol\eta}})\right)^{\mathrm{T}}\boldsymbol{\mathcal{D}}\left(\hat{\boldsymbol\eta}-f({\tilde{\boldsymbol\eta}})\right),
\label{exip_etatilde}
\end{equation}
where the mapping $f({\tilde{\boldsymbol\eta}})={\boldsymbol\eta}$ integrates the geometric mapping in Equation~\eqref{eq:geometricMapping}. Inspired by the EXIP~\cite{Stoica,Shahmansoori}, the weight matrix $\boldsymbol{\mathcal{D}}$ in Equation~\eqref{exip_etatilde} is set to the Hessian of the objective function $L(\boldsymbol{\eta})$ of the program in Equation~(\ref{optprobnoisy}) at the estimated channel parameters $\hat{\bm{\eta}}$, {\it i.e.}, 
\begin{equation}
    \begin{aligned}
        \boldsymbol{\mathcal{D}} \triangleq
        &\left[\begin{array}{cccc}
        \frac{\partial^2{L}(\hat{\boldsymbol\eta})}{\partial{\tau_0}\partial{\tau_0}}
        &\frac{\partial^2{L}(\hat{\boldsymbol\eta})}{\partial{\tau_0}\partial{\theta_{\mathrm{Tx},0}}} & \cdots&\frac{\partial^2{L}(\hat{\boldsymbol\eta})}{\partial{\tau_0}\partial{\theta_{\mathrm{Rx},L}}}\\
        \frac{\partial^2{L}(\hat{\boldsymbol\eta})}{\partial{\theta_{\mathrm{Tx},0}}\partial{\tau_0}}& \frac{\partial^2{L}(\hat{\boldsymbol\eta})}{\partial{\theta_{\mathrm{Tx},0}}\partial{\theta_{\mathrm{Tx},0}}}& \cdots & \frac{\partial^2{L}(\hat{\boldsymbol\eta})}{\partial{\theta_{\mathrm{Tx},0}}\partial{\theta_{\mathrm{Rx},L}}}\\
        \vdots & \vdots & \ddots & \vdots \\
        \frac{\partial^2{L}(\hat{\boldsymbol\eta})}{\partial{\theta_{\mathrm{Rx},L}}\partial{\tau_0}}& \frac{\partial^2{L}(\hat{\boldsymbol\eta})}{\partial{\theta_{\mathrm{Rx},L}}\partial{\theta_{\mathrm{Tx},0}}}& \cdots & \frac{\partial^2{L}(\hat{\boldsymbol\eta})}{\partial{\theta_{\mathrm{Rx},L}}\partial{\theta_{\mathrm{Rx},L}}}\\
        \end{array}\right],
    \end{aligned} \label{eq:Dmatrix}
\end{equation}
which depends on the channel parameters estimated via the proposed ANM based method of Section \ref{sec:errorRate}\footnote{We note that the derivative of the term $\epsilon\left\Vert \bm{H}_v \right\Vert_{\mathcal{A}}$ in Equation~\eqref{optprobnoisy} with respect to the parameters in $\bm{\eta}$ is $0$, so Equation~\eqref{eq:Dmatrix} reduces to the Hessian of the quadratic loss function $\frac{1}{2}{\|\boldsymbol{Y}-g(\boldsymbol{H}_v)\boldsymbol{X}\|}^2_F$ at $\hat{\bm{\eta}}$.}.

The non-linear least squares problem in Equation~(\ref{exip_etatilde}) can be solved via the Levenberg-Marquard-Fletcher algorithm~\cite{Fletcher}. The parameters in $\tilde{\boldsymbol{\eta}}$ are initialized with the values $\boldsymbol{\hat p}_{\text{LOS}}$, $\hat{\theta}_{o,\text{LOS}}$, $\{\hat s_{1,y,\text{LOS}},\hat s_{2,y,\text{LOS}},\cdots,\hat s_{K,y,\text{LOS}}\}$, and $\{\hat s_{1,x,\text{LOS}},\hat s_{2,x,\text{LOS}},\cdots,\hat s_{K,x,\text{LOS}}\}$, which are derived in the following set of equations,
\begingroup
\allowdisplaybreaks
\begin{subequations}\label{eq:locEstimator}
\begin{align}
\boldsymbol{\hat p}_{\text{LOS}} &= \boldsymbol{q} + c\hat\tau_0[\cos({\hat\theta}_{\mathrm{Tx},0}),\sin(\hat\theta_{\mathrm{Tx},0})]^{\mathrm{T}},\label{hatp}\\
\hat{\theta}_{_o,\text{LOS}} &= \pi + \hat\theta_{\mathrm{Tx},0} - \hat\theta_{\mathrm{Rx},0},\\
\hat s_{k,y,\text{LOS}} &= \tan(\hat\theta_{\mathrm{Tx},k})(\hat s_{k,x}-q_{x})+q_{y},\\
\hat{s}_{k,x,\text{LOS}} &={} \nonumber\\
& \hspace{-.3in} \frac{\tan(\hat\theta_{\mathrm{Tx},k})q_{x}-\tan(\hat\theta_{\mathrm{Rx},k}+\hat{\theta}_{_o,\text{LOS}})\hat p_{\text{LOS},x}+\hat p_{\text{LOS},y}-q_{y}}{\tan(\hat\theta_{\mathrm{Tx},k})-\tan(\hat\theta_{\mathrm{Rx},k}+\hat{\theta}_{_o,\text{LOS}})}.
\label{hatsky}
\end{align}
\end{subequations}
\endgroup
Note that, Equation~(\ref{exip_etatilde}) implicitly depends on the received signal via $\hat{\boldsymbol\eta}$ and $\boldsymbol{\mathcal{D}}$. Essentially, the non-linear weighted least squares problem in Equation~(\ref{exip_etatilde}) is a second-order Taylor expansion of the cost function in Equation~(\ref{optprobnoisy}) re-parametrized by $\{\boldsymbol p,\theta_o,\boldsymbol{s}_1, \boldsymbol{s}_2,\cdots,\boldsymbol{s}_K, \gamma_0,\gamma_1,\cdots,\gamma_K\}$ at the global optimum~\cite{Stoica}, which is different from more typical {\em indirect localization} methods with multilateration \cite{Zekavat}.

\begin{figure*}[t]
\centering
\includegraphics[scale=0.54]{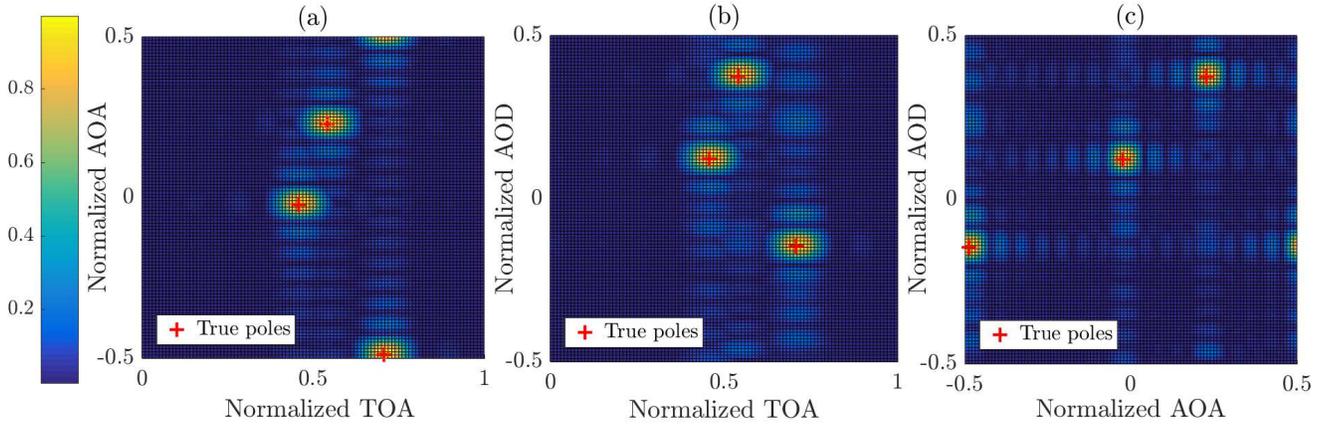}
\caption{{The modulus of the dual polynomial in the absence of noise, defined as the solution of the dual problem in Equation~\eqref{dualoptprob}.}}
\label{fig:dualpoly}
\end{figure*}
\section{Simulation Results}\label{sec:sim}

In this section, we evaluate the performance of our proposed scheme {, {\it i.e.,} LOCMAN.} First, the theoretical analyses presented in Propositions~\ref{prop:uniqueandoptimal}-\ref{errorboundG} are numerically validated in Section \ref{sec:validationProposition}. Then, in Section~\ref{sec:accuracyComparison}, our LOCMAN scheme is compared with DCS-SOMP~\cite{Duarte,Shahmansoori} to the show the performance improvements over the estimation accuracy of individual channel parameters, orientation and location. Finally, the effects of the incorrect model order selection as well as synchronization errors are studied via simulation in Section~\ref{subsec:modelOrder}.

\subsection{Signal Parameters}
In all of the numerical results, we set $f_c = \SI{60}{\giga\Hz}$, $B=\SI{100}{\mega\Hz}$, $N_s=15$, $N_r=16$, $N_t=16$, $G=16$, $K=2$, and $d=\frac{\lambda_c}{2}$. The pilot signals are random complex values uniformly distributed on the unit circle. Unless otherwise stated, the total number of paths is known at the receiver and the channel coefficients are generated based on the free-space path loss model~\cite{Goldsmith} in the simulation. The BS is placed at $[\SI{0}{\meter},\SI{0}{\meter}]^{\mathrm{T}}$ while the target is at $[\SI{20}{\meter},\SI{5}{\meter}]^{\mathrm{T}}$ with an orientation $\theta_o=\SI{0.2}{\radian}$. The scatterers corresponding to two NLOS paths are placed at $[\SI{7.45}{\meter},\SI{8.54}{\meter} ]^{\mathrm{T}}$ and $[\SI{19.89}{\meter}, \SI{-6.05}{\meter}]^{\mathrm{T}}$.  We solve the optimization problem in Equations~(\ref{optprobnoiseless}) and~(\ref{optprobnoisy}) for the channel estimation in the noiseless and noisy cases, {where the semidefinite representation of the atomic norm in Proposition \ref{prop:equivalenceOfTheAtomicNorm} is adopted}. For the atomic norm denoiser in Equation~\eqref{optprobnoisy}, the regularization parameters $\epsilon$ is set according to Proposition  \ref{errorboundG}. In the presence of  independent, zero mean, complex Gaussian distributed noise, the SNR is defined as $\text{SNR} = \frac{\left\Vert \bm{H} \bm{X} \right\Vert_{F}^2}{  \left\Vert \bm{W} \right\Vert_{F}^2}$.

\subsection{Validation of Propositions \ref{prop:uniqueandoptimal}-\ref{errorboundG}}\label{sec:validationProposition}

\begin{figure}[t]
\centering
\includegraphics[scale=0.52]{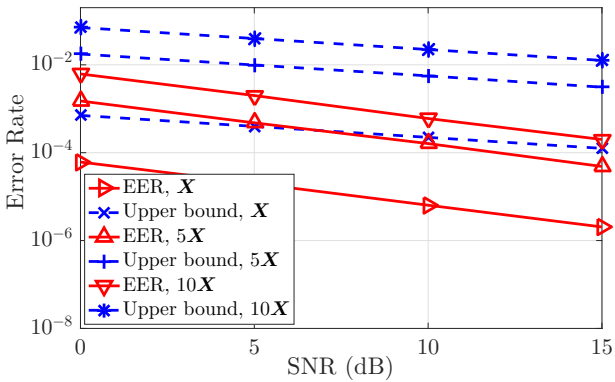}\vspace{-1pt}
\caption{{The EER of the estimate and its upper bound.}}
\label{fig:expectedmse}
\end{figure}

Fig. \ref{fig:dualpoly} shows the modulus of the dual polynomial for the noiseless case. Recall that, the dual polynomial is a  trigonometric  function with respect to TOAs, AOAs, and AODs. For better visualization, the modulus of the dual polynomial is projected on the TOA-AOA, TOA-AOD, and AOA-AOD axes in Fig. \ref{fig:dualpoly} (a), (b), and (c), respectively. Consistent with our analysis for Proposition \ref{prop:uniqueandoptimal}, from Fig. \ref{fig:dualpoly}, we can observe that the modulus of dual polynomial nearly equals $1$ when $\{\tau,\theta_\mathrm{Rx},\theta_\mathrm{Tx}\}\in \bm{\eta}$ {even} though condition C1 of Proposition \ref{prop:exactRecovery} is not satisfied given the values of $N$, $N_r$, and $N_t$  selected in our experiments.

For the noisy scenario, the EER of the estimate and its theoretical bound are shown in Fig. \ref{fig:expectedmse}. We can verify that that $2\epsilon\left\|\boldsymbol{H}_v\right\|_\mathcal{A}$ is an upper bound of $\mathbb{E}\left\{\left\|\left(g(\hat{\boldsymbol{H}}_v)-g({\boldsymbol{H}}_v)\right)\boldsymbol{X}\right\|^{2}_{F}\right\}$, as stated in Proposition \ref{errorboundG}. Though the bound is not sharp, Proposition \ref{errorboundG} suggests a good choice of the regularization parameter $\epsilon$. To show the benefits of setting $\epsilon$ according to Equation~(\ref{epsilon}), we present in Fig.~\ref{fig:choiceofepsilon} the RMSEs of localization, with different choices of pilot signals, {\it i.e.,} $\boldsymbol{X}$, $5\boldsymbol{X}$, $10\boldsymbol{X}$, when $\epsilon$ is selected according to Proposition~\ref{errorboundG} and according to~\cite{Tsai} for comparison. We observe that the localization accuracy using our design of $\epsilon$ yield near constant RMSEs as the energy of pilot signal increases while the performance degrades with the choice suggested in~\cite{Tsai}. This strong improvement can be explained by the fact that  our proposed choice of the regularization parameter $\epsilon$ incorporates the energy of the pilot signals and the structure of the virtual channel matrix, leading to a dynamic penalization of the atomic norm in Equation~\eqref{optprobnoisy}.

\begin{figure}[t]
\centering
\includegraphics[scale=0.52]{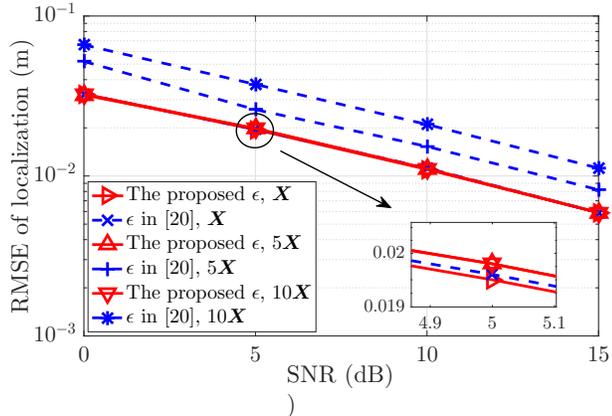}
\caption{{The influence of the choice of the pilot signal on the localization accuracy.}}
\label{fig:choiceofepsilon}
\end{figure}


\subsection{Estimation Accuracy Comparison}\label{sec:accuracyComparison}

The RMSEs of TOA, AOA, and AOD estimation using our LOCMAN scheme are shown in Figs. \ref{fig:channelestimation} (a)-(c), respectively, with comparisons to the DCS-SOMP based method~\cite{Duarte,Shahmansoori} and to the Cram{\'e}r-Rao lower bounds (CRLBs)~\cite{Shahmansoori}. As observed in Fig.~\ref{fig:channelestimation}, our LOCMAN scheme outperforms the DCS-SOMP based method, as it relies on a discretization of the parameter space, and suffers from basis mismatch.
In contrast, the estimation accuracy of LOCMAN does not rely on a grid resolution. Furthermore, Figs.~\ref{fig:channelestimation} (a)-(c) shows that the RMSEs of TOA, AOA, and AOD estimation nearly coincide with the CRLB.

\begin{figure}[t]
\centering
\includegraphics[scale=0.52]{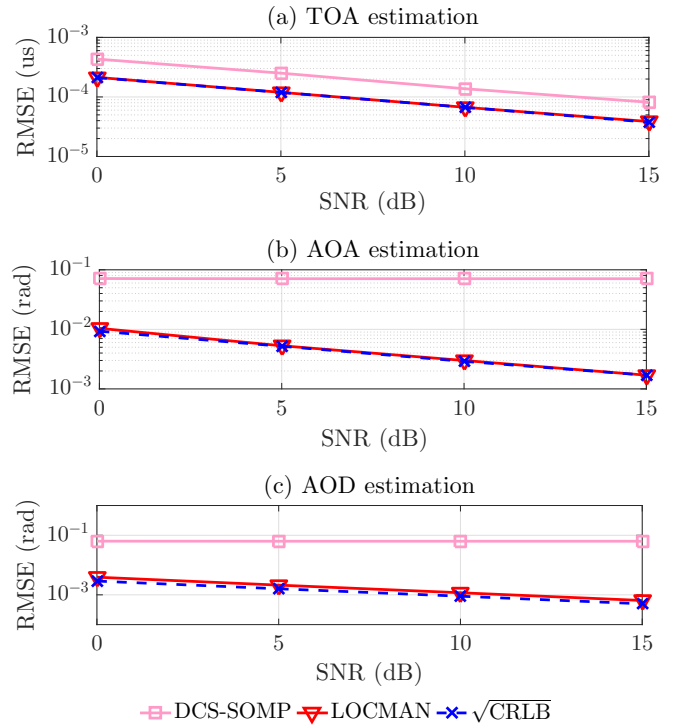}
\caption{{(a) The RMSE of TOA estimation; (b) The RMSE of AOA estimation; (c) The RMSE of AOD estimation.}}
\label{fig:channelestimation}
\end{figure}
\begin{figure}[t]
\centering
\includegraphics[scale=0.52]{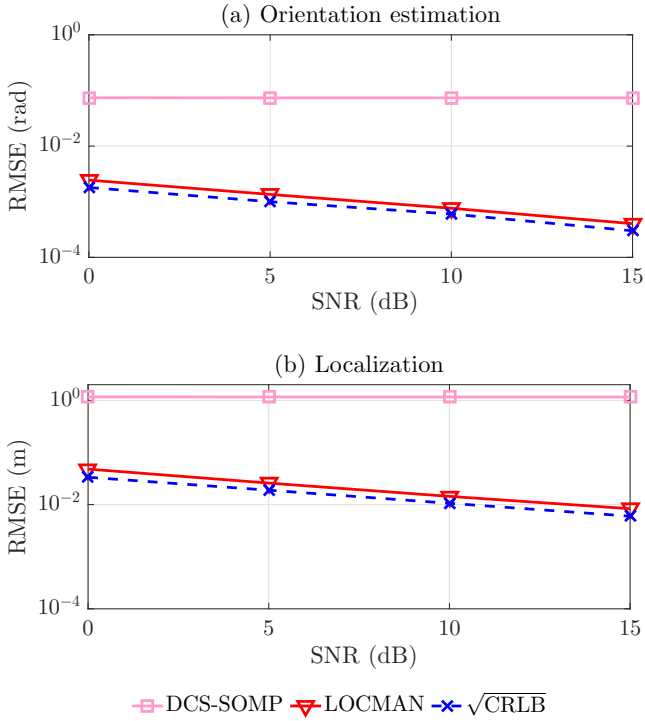}\vspace{-2.5pt}
\caption{{(a) The RMSE of orientation estimation; (b) The RMSE of localization.}}
\label{fig:localization}
\end{figure}

\begin{figure}[t]
\centering
\includegraphics[scale=0.52]{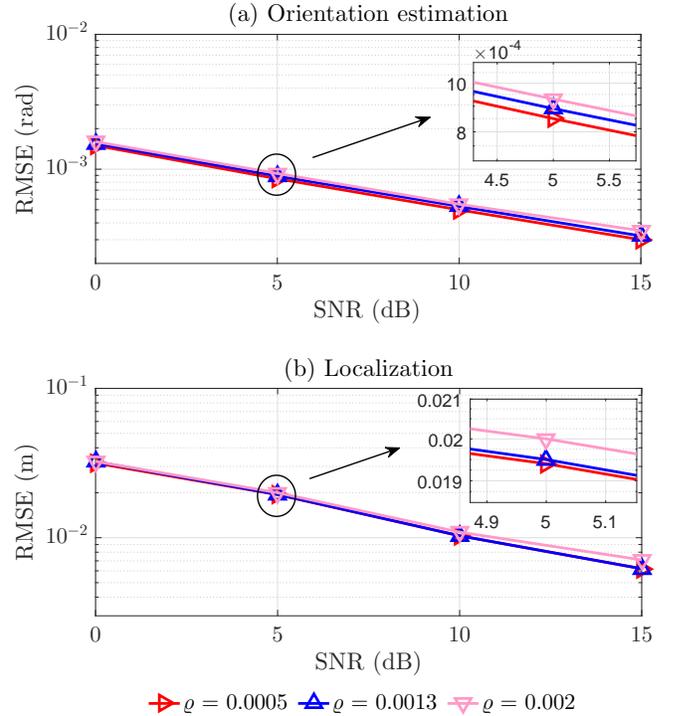}\vspace{-3pt}
\caption{{The influence of incorrect model order selection on localization and orientation estimation.}}
\label{fig:modelorder}
\end{figure}

Due to the quality of our super-resolution channel estimation, lower RMSEs for localization and orientation estimation are achieved as seen in Figs. \ref{fig:localization} (a) and (b) versus the DCS-SOMP based method~\cite{Shahmansoori}\footnote{For comparison fairness, the refinement of estimates of channel parameters in~\cite{Shahmansoori} is not implemented for both schemes. Note that, as compared 
with the DCS-SOMP based method, our scheme could provide more accurate estimates for the initialization of the refinement stage to avoid local optima.}. In the SNR sense, more than $7$ dB gain are obtained using our scheme. In addition, there is only around 2dB gap between the RMSE of localization or orientation estimation using our LOCMAN scheme and the CRLB curves, verifying the efficacy of our design. 

\subsection{Model Order Selection and Synchronization Error}\label{subsec:modelOrder}

We next study the scenario where the total number of paths $K$ is unknown at the receiver. The efficiency of the thresholding-based model order selection proposed in Section \ref{sec:method}-D is investigated. To show the effect of the choices of $\varrho$ on the performance degradation, we fix $l_0$, $l_1$, and $l_2$ in Equation~(\ref{virtualchannel}) to $0.0025$ $0.0011$, and $0.0014$, while we vary the value of $\varrho$. Figure \ref{fig:modelorder} presents the RMSE of orientation estimation and localization for $\varrho=0.0005$, $\varrho=0.0013$ and $\varrho=0.002$, respectively, from which a similar estimation accuracy for different choices of $\varrho$ can be observed. Since we assume that the LOS path is that with the largest modulus of channel coefficient in Section \ref{sec:method}-A when $K$ is unknown at the receiver, the channel parameters associated with the LOS path are always available for localization\footnote{This is achievable unless the value of $\varrho$ is so large that the estimated total number of paths is 0.} and the loss caused by the estimation error of the model order is less than 1 dB with respect to the RMSE of localization or orientation estimation.

Finally, the performance degradation caused by synchronization errors is investigated though synchronization errors are not considered in our signal model. Specifically, denoting by $\tilde{\tau}$ the unknown synchronization error, we define AOAs and AODs according to Equations~(\ref{thetatx0})-(\ref{thetarxk}) while re-express the TOA of each path as 
\begin{subequations}
\begin{align}
\tau_{0} &=\frac{\left\|\boldsymbol{p}-\boldsymbol{q}\right\|_{2}} { c}+\tilde{\tau},\label{tau0se}\\
\tau_{k} &=\frac{\left\|\boldsymbol{q}-\boldsymbol{s}_{k}\right\|_{2} +\left\|\boldsymbol{p}-\boldsymbol{s}_{k}\right\|_{2}} {c}+\tilde{\tau}, \quad k>0.\label{taukse} 
\end{align}
\end{subequations} 
We still exploit the LOCMAN proposed in Section \ref{sec:method} for channel estimation as well as localization. The RMSEs of orientation estimation and localization are presented in Figs. \ref{fig:SE} (a) and (b), respectively. Since the definitions of AOAs and AODs in Equations~(\ref{thetatx0})-(\ref{thetarxk}) are maintained, the synchronization error has negligible influence on the AOA or AOD estimation and correspondingly, the RMSE of orientation estimation is nearly constant with the increase of the value of $\tilde{\tau}$, as observed in Fig. \ref{fig:SE} (a). In addition, the synchronization error has a minor effect on the TOA estimation as the value of $\tilde{\tau}$ is quite small as compared with the TOA of each path in the simulation\footnote{Note that, for a localization problem, the synchronization error is generally at ns-level\cite{Zhao}.}. In contrast, the synchronization error degrades the localization accuracy, according to Fig. \ref{fig:SE} (b). However, as compared with the DCS-SOMP based method, the centimeter-level localization accuracy is still attained using our scheme when $\tilde{\tau}\leq \SI{1E-3}{\micro\second}$. The estimation design of the synchronization error is left as the future work.

\begin{figure}[t]
\centering
\includegraphics[scale=0.52]{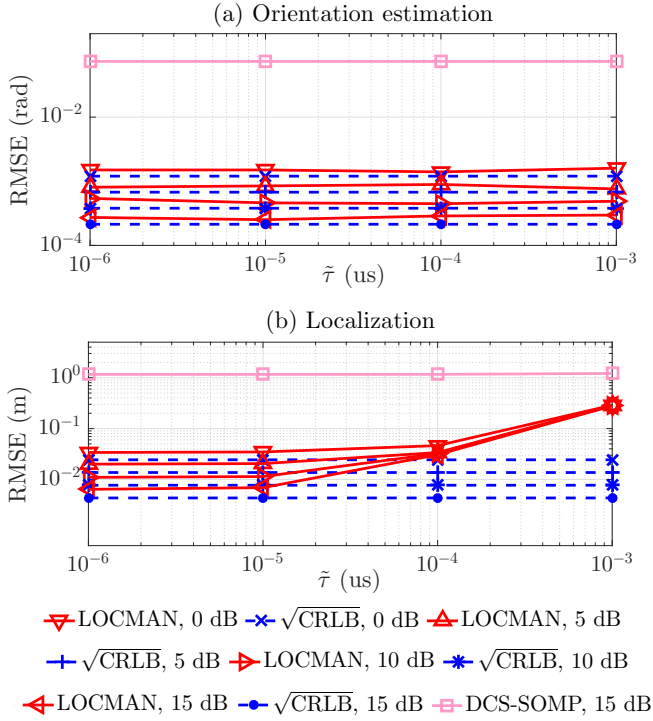}
\caption{{The influence of synchronization error on localization and orientation estimation.}}
\label{fig:SE}
\end{figure}

\section{Conclusions}\label{sec:con}
In this paper, a multi-dimensional atomic norm based method is proposed for  high-accuracy localization and orientation estimation in mmWave MIMO OFDM systems.  To effectively estimate all of the location-relevant channel parameters with super-resolution, a novel virtual channel matrix is introduced and an atomic norm based channel estimator is proposed. Theoretical performance guarantees are derived for both the noiseless and noisy cases. Using the estimates of all the paths, a weighted least squares problem is proposed based on the extended invariance principle to accurately recover the location and orientation. The new method offers strong improvements with respect to the RMSE of estimation over prior work~\cite{Shahmansoori} (more than $7$ dB gain) and exhibits high robustness in terms of the incorrect model order selection or synchronization errors. Furthermore, with the proposed method, the RMSEs of channel estimation, localization and orientation estimation nearly achieve  the CRLBs.

\appendices
\begingroup
\allowdisplaybreaks

\section{Proof of Proposition \ref{prop:uniqueandoptimal}}
To derive the optimality and uniqueness conditions, we follow the proof structure of \cite{candes2014towards,TangG}, and define the dual norm as
\begin{equation}
\begin{aligned}
\| \boldsymbol{Z}\|_{\mathcal{A}}^{*}&=\sup _{\|\boldsymbol{\tilde{{H}}}_v\|_{\mathcal{A}}\leq 1} \langle \boldsymbol{Z}  , \boldsymbol{\tilde{{H}}}_v\rangle_R\\
&= \sup_{ \tau,\theta_\mathrm{Rx},\theta_\mathrm{Tx},\phi} \langle \boldsymbol{Z}  , \boldsymbol{A}(\tau,\theta_\mathrm{Rx},\theta_\mathrm{Tx},\phi)\rangle_R,\\
&= \sup_{ \tau,\theta_\mathrm{Rx},\theta_\mathrm{Tx}} \left|\langle \boldsymbol{Z}  , \boldsymbol{A}(\tau,\theta_\mathrm{Rx},\theta_\mathrm{Tx},0)\rangle\right|.
\end{aligned}\label{defdualnorm}
\end{equation}
The Lagrange dual problem of Equation~(\ref{optprobnoiseless})  is given by
\begin{equation}
\begin{aligned}
\mathop{\arg\min}_{{\boldsymbol{\Lambda}}}\quad &\langle {\boldsymbol{\Lambda}}, \boldsymbol{Y}\rangle_{R} \\
\text { s.t. }\quad&\left\|\sum_{i=1}^{{N_s}{}}\boldsymbol{Q}_{i,1}^\mathrm{H} {\boldsymbol{\Lambda}}{\boldsymbol{X}}^\mathrm{H} \boldsymbol{Q}_{i,2}^\mathrm{H}\right\|_{\mathcal{A}}^{*}\leq 1.
\end{aligned}
\label{dualoptprob}
\end{equation}
It can be verified that any $\boldsymbol{\Lambda}$ satisfying Equation~(\ref{sucondnoiseless}) is dual feasible. Next, we have that
\begingroup
\allowdisplaybreaks
\begin{align}
&\|\boldsymbol{H}_v\|_\mathcal{A}\overset{\mathrm{(a)}}{\geq}\| \boldsymbol{H}_{v}\|_\mathcal{A}\left\| \sum_{i=1}^{N_s}\boldsymbol{Q}_{i,1}^{\mathrm{H}} {\boldsymbol{\Lambda}}{\boldsymbol{X}}^\mathrm{H}\boldsymbol{Q}_{i,2}^{\mathrm{H}} \right\|_\mathcal{A}^{*} \nonumber\\
&\overset{\mathrm{(b)}}{\geq} \left\langle\sum_{i=1}^{N_s}\boldsymbol{Q}_{i,1}^{\mathrm{H}} {\boldsymbol{\Lambda}}{\boldsymbol{X}}^\mathrm{H}\boldsymbol{Q}_{i,2}^{\mathrm{H}}, \boldsymbol{H}_{v}\right\rangle_{R} \nonumber\\
&=\left\langle\sum_{i=1}^{N_s}\boldsymbol{Q}_{i,1}^{\mathrm{H}} {\boldsymbol{\Lambda}}{\boldsymbol{X}}^\mathrm{H}\boldsymbol{Q}_{i,2}^{\mathrm{H}}, \sum_{k=0}^{K} l_{k} \boldsymbol{A}(\tau_k,\theta_\mathrm{Rx,k},\theta_\mathrm{Tx,k},0)\right\rangle_{R} \nonumber\\
&=\sum_{k=0}^{K} \operatorname{Re}\left(l_{k}^{*} \left\langle\sum_{i=1}^{N_s} \boldsymbol{Q}_{i,1}^{\mathrm{H}} {\boldsymbol{\Lambda}}{\boldsymbol{X}}^\mathrm{H}\boldsymbol{Q}_{i,2}^{\mathrm{H}}, \boldsymbol{A}(\tau_k,\theta_\mathrm{Rx,k},\theta_\mathrm{Tx,k},0)\right\rangle\right) \nonumber\\
&=\sum_{k=0}^{K}\operatorname{Re}\left( l_{k}^{*}\Pi(\tau_k,\theta_\mathrm{Rx,k},\theta_\mathrm{Tx,k},0)\right)\overset{\mathrm{(c)}}{=}\sum_{k=0}^{K}\left|l_{k}\right|\geq\left\|\boldsymbol{H}_{v}\right\|_\mathcal{{A}},
\end{align}
\endgroup
where $\mathrm{(a)}$ and $\mathrm{(c)}$ hold because of Equation~(\ref{sucondnoiseless}) while $\mathrm{(b)}$ results from  H{\"o}lder inequality. Therefore, $ \langle {\boldsymbol{\Lambda}}, \boldsymbol{Y}\rangle_{R}=\left\|\boldsymbol{H}_{v}\right\|_\mathcal{{A}}$ and strong duality holds. Then, $\boldsymbol{H}_{v}$ is a primal optimal solution and ${\boldsymbol{\Lambda}}$ is a dual optimal solution.

We show the uniqueness by contradiction by assuming the existence of a distinct optimal solution $\tilde{\boldsymbol{H}}_{v}$ such that 
\begin{equation}
\tilde{\boldsymbol{H}}_{v}=\sum_{\tilde{k}=0}^{\tilde{K}} \tilde{l}_{k} \boldsymbol{A}\left(\tilde{\tau}_k,\tilde{\theta}_\mathrm{Rx,k},\tilde{\theta}_\mathrm{Tx,k},0\right)
\end{equation}
with
$
{\|\tilde{\boldsymbol{H}}_v\|}_\mathcal{A}=\sum_{\tilde{k}=0}^{\tilde{K}}|\tilde{l}_{k}|
$.
From Equation~(\ref{sucondnoiseless}), we have Equation~(\ref{uniqueness}), which contradicts the strong duality. Hence, the optimal solution is unique, concluding the proof.
\begin{figure*}[!b]
\hrule
\begin{align}
\left\langle\sum_{i=1}^{N_s}\boldsymbol{Q}_{i,1}^{\mathrm{H}} {\boldsymbol{\Lambda}}{\boldsymbol{X}}^\mathrm{H}\boldsymbol{Q}_{i,2}^{\mathrm{H}}, \tilde{\boldsymbol{H}}_{v}\right\rangle_{R}={}&\left\langle\sum_{i=1}^{N_s}\boldsymbol{Q}_{i,1}^{\mathrm{H}} {\boldsymbol{\Lambda}}{\boldsymbol{X}}^\mathrm{H}\boldsymbol{Q}_{i,2}^{\mathrm{H}}, \sum_{\tilde{k}=0}^{\tilde{K}} \tilde{l}_{\tilde{k}} \boldsymbol{A}\left(\tilde{\tau}_{\tilde{k}},\tilde{\theta}_\mathrm{Rx,\tilde{k}},\tilde{\theta}_\mathrm{Tx,\tilde{k}},0\right)\right\rangle_{R} \nonumber \\
={}&\sum_{(\tau_{\tilde{k}},\theta_\mathrm{Rx,\tilde{k}},\theta_\mathrm{Tx,\tilde{k}}) \in \bm{\eta}} \operatorname{Re}\left( \tilde{l}_{\tilde{k}}^{*} \left\langle\sum_{i=1}^{N_s} \boldsymbol{Q}_{i,1}^{\mathrm{H}} {\boldsymbol{\Lambda}}{\boldsymbol{X}}^\mathrm{H}\boldsymbol{Q}_{i,2}^{\mathrm{H}}, \boldsymbol{A}(\tau_{\tilde{k}},\theta_\mathrm{Rx,\tilde{k}},\theta_\mathrm{Tx,\tilde{k}},0)\right\rangle\right) \nonumber\\
{}&\quad + \sum_{(\tau_{\tilde{k}},\theta_\mathrm{Rx,\tilde{k}},\theta_\mathrm{Tx,\tilde{k}}) \notin \bm{\eta}}\operatorname{Re}\left( \tilde{l}_{\tilde{k}}^{*} \left\langle\sum_{i=1}^{N_s} \boldsymbol{Q}_{i,1}^{\mathrm{H}} {\boldsymbol{\Lambda}}{\boldsymbol{X}}^\mathrm{H}\boldsymbol{Q}_{i,2}^{\mathrm{H}}, \boldsymbol{A}(\tau_{\tilde{k}},\theta_\mathrm{Rx,\tilde{k}},\theta_\mathrm{Tx,\tilde{k}},0)\right\rangle\right) \nonumber \\
<{}& \sum_{(\tau_{\tilde{k}},\theta_\mathrm{Rx,\tilde{k}},\theta_\mathrm{Tx,\tilde{k}}) \in\bm{\eta}} |\tilde{l}_{\tilde{k}}^{*}| +\sum_{(\tau_{\tilde{k}},\theta_\mathrm{Rx,\tilde{k}},\theta_\mathrm{Tx,\tilde{k}}) \notin\bm{\eta}} |\tilde{l}_{\tilde{k}}^{*}|=\|\tilde{\boldsymbol{H}}_v\|_\mathcal{A}.
\label{uniqueness}
\end{align}
\end{figure*}

\section{Proof of Proposition  \ref{errorboundG}}
For the proof of Proposition \ref{errorboundG}, we first extend~\cite[Lemma1 and Theorem 1]{bhaskar2013atomic} to our atomic norm denoising formulation.

\begin{lemma}\label{optimalitynoisy}
If $\hat{\boldsymbol{H}}_v$ is the solution of Equation~(\ref{optprobnoisy}), it holds that 
\begin{subequations}
\begin{align}
\left\|\sum_{i=1}^{N_s}\boldsymbol{Q}_{i,1}^\mathrm{H}\left(\boldsymbol{Y}-g(\hat{\boldsymbol{H}}_v)\boldsymbol{X}\right)\boldsymbol{X}^{\mathrm{H}}\boldsymbol{Q}_{i,2}^\mathrm{H}\right\|_{\mathcal{A}}^{*}\leq{}& \epsilon,\\
\left\langle\boldsymbol{Y}-g(\hat{\boldsymbol{H}}_v)\boldsymbol{X},g(\hat{\boldsymbol{H}}_v)\boldsymbol{X}\right\rangle_R={}&\epsilon\left\|\hat{{\boldsymbol{H}}}_v\right\|_{\mathcal{A}}.
\end{align}
\label{lemma1}
\end{subequations}
\end{lemma}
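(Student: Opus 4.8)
The plan is to treat Equation~\eqref{optprobnoisy} as an unconstrained convex program and to read off both claims from its first-order (subgradient) optimality condition, mirroring \cite[Lemma 1 and Theorem 1]{bhaskar2013atomic} with the operator $g$ and the pilot matrix $\boldsymbol{X}$ inserted into the fidelity term. Denote the objective by $F(\boldsymbol{H}_v)\triangleq \epsilon\|\boldsymbol{H}_v\|_{\mathcal{A}}+\frac{1}{2}\|\boldsymbol{Y}-g(\boldsymbol{H}_v)\boldsymbol{X}\|_F^2$. Since $\boldsymbol{H}_v\mapsto g(\boldsymbol{H}_v)\boldsymbol{X}$ is linear, the data-fidelity term is smooth and convex while $\|\cdot\|_{\mathcal{A}}$ is convex and nonsmooth; hence $F$ is convex and $\hat{\boldsymbol{H}}_v$ is a minimizer if and only if $\boldsymbol{0}\in\partial F(\hat{\boldsymbol{H}}_v)$.

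First I would compute the gradient of the fidelity term. The adjoint of the map $\mathcal{L}(\boldsymbol{H}_v)\triangleq g(\boldsymbol{H}_v)\boldsymbol{X}=\sum_{i=1}^{N_s}\boldsymbol{Q}_{i,1}\boldsymbol{H}_v\boldsymbol{Q}_{i,2}\boldsymbol{X}$, taken with respect to the real inner product $\langle\cdot,\cdot\rangle_R$, is obtained from the cyclicity of the trace and reads $\mathcal{L}^{*}(\boldsymbol{M})=\sum_{i=1}^{N_s}\boldsymbol{Q}_{i,1}^\mathrm{H}\boldsymbol{M}\boldsymbol{X}^{\mathrm{H}}\boldsymbol{Q}_{i,2}^\mathrm{H}$. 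Consequently the gradient of $\frac{1}{2}\|\boldsymbol{Y}-\mathcal{L}(\boldsymbol{H}_v)\|_F^2$ is $-\mathcal{L}^{*}(\boldsymbol{Y}-\mathcal{L}(\boldsymbol{H}_v))$. Writing the residual $\boldsymbol{R}\triangleq\boldsymbol{Y}-g(\hat{\boldsymbol{H}}_v)\boldsymbol{X}$ and $\boldsymbol{G}\triangleq\mathcal{L}^{*}(\boldsymbol{R})=\sum_{i=1}^{N_s}\boldsymbol{Q}_{i,1}^\mathrm{H}\boldsymbol{R}\boldsymbol{X}^{\mathrm{H}}\boldsymbol{Q}_{i,2}^\mathrm{H}$, the stationarity condition $\boldsymbol{0}\in\epsilon\,\partial\|\hat{\boldsymbol{H}}_v\|_{\mathcal{A}}-\boldsymbol{G}$ becomes $\boldsymbol{G}\in\epsilon\,\partial\|\hat{\boldsymbol{H}}_v\|_{\mathcal{A}}$.

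Next I would invoke the standard characterization of the subdifferential of a norm, namely that $\boldsymbol{G}/\epsilon\in\partial\|\hat{\boldsymbol{H}}_v\|_{\mathcal{A}}$ holds if and only if $\|\boldsymbol{G}/\epsilon\|_{\mathcal{A}}^{*}\leq 1$ and $\langle\boldsymbol{G}/\epsilon,\hat{\boldsymbol{H}}_v\rangle_R=\|\hat{\boldsymbol{H}}_v\|_{\mathcal{A}}$, where $\|\cdot\|_{\mathcal{A}}^{*}$ is the dual norm of Equation~\eqref{defdualnorm}. Scaling the inequality by $\epsilon$ yields the first claim $\|\boldsymbol{G}\|_{\mathcal{A}}^{*}\leq\epsilon$. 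The equality gives $\langle\boldsymbol{G},\hat{\boldsymbol{H}}_v\rangle_R=\epsilon\|\hat{\boldsymbol{H}}_v\|_{\mathcal{A}}$, and applying the adjoint identity once more converts the left-hand side, $\langle\mathcal{L}^{*}(\boldsymbol{R}),\hat{\boldsymbol{H}}_v\rangle_R=\langle\boldsymbol{R},\mathcal{L}(\hat{\boldsymbol{H}}_v)\rangle_R=\langle\boldsymbol{Y}-g(\hat{\boldsymbol{H}}_v)\boldsymbol{X},g(\hat{\boldsymbol{H}}_v)\boldsymbol{X}\rangle_R$, which is exactly the second claim.

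I expect the main obstacle to be the careful justification of the subdifferential characterization of $\|\cdot\|_{\mathcal{A}}$ in this matrix setting with the real-part inner product, together with verifying the adjoint identity $\langle\boldsymbol{M},\mathcal{L}(\boldsymbol{H}_v)\rangle=\langle\mathcal{L}^{*}(\boldsymbol{M}),\boldsymbol{H}_v\rangle$; in particular one must check that the support-function duality underlying $\partial\|\cdot\|_{\mathcal{A}}$ carries through unchanged even though the operator $g$ and the matrix $\boldsymbol{X}$ sit between $\boldsymbol{H}_v$ and the data, and that the dual norm arising in the optimality condition coincides with the one defined in Equation~\eqref{defdualnorm}. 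Once these two identifications are in place the remainder is routine bookkeeping that follows \cite{bhaskar2013atomic} verbatim.
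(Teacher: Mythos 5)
Your proposal is correct and follows essentially the same route as the paper's proof: both write the first-order condition $\boldsymbol{0}\in\partial F(\hat{\boldsymbol{H}}_v)$, identify $\sum_{i=1}^{N_s}\boldsymbol{Q}_{i,1}^\mathrm{H}\left(\boldsymbol{Y}-g(\hat{\boldsymbol{H}}_v)\boldsymbol{X}\right)\boldsymbol{X}^{\mathrm{H}}\boldsymbol{Q}_{i,2}^\mathrm{H}$ as $\epsilon$ times an element of $\partial\|\hat{\boldsymbol{H}}_v\|_{\mathcal{A}}$, and then invoke the standard dual-norm characterization of the subdifferential of a norm together with the adjoint identity to read off the two claims. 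Your write-up is, if anything, slightly more explicit than the paper's about the adjoint $\mathcal{L}^{*}$ and the quadratic fidelity term.
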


\begin{proof}
Define 
\(
a(\hat{\boldsymbol{H}}_v)\triangleq\left\|\boldsymbol{Y}-g(\hat{\boldsymbol{H}}_v)\boldsymbol{X}\right\|+\epsilon\left\|\hat{\boldsymbol{H}}_v\right\|_{\mathcal{A}}.
\)
The subdifferential $\partial a(\hat{\boldsymbol{H}}_v)$ of the previous writes
\begin{equation}
\partial a(\hat{\boldsymbol{H}}_v)=\sum_{i=1}^{N_s}\boldsymbol{Q}_{i,1}^\mathrm{H}\left(g(\hat{\boldsymbol{H}}_v)\boldsymbol{X}-\boldsymbol{Y}\right)\boldsymbol{X}^{\mathrm{H}}\boldsymbol{Q}_{i,2}^\mathrm{H}+\epsilon\boldsymbol{Z},
\end{equation}
where
$
\boldsymbol{Z}\in \partial\left\|\hat{\boldsymbol{H}}_v\right\|_\mathcal{A}.
$
Since $\hat{\boldsymbol{H}}_v$ is the solution of Equation~(\ref{optprobnoisy}), 
$
\boldsymbol{0}\in \partial a(\hat{\boldsymbol{H}}_v)
$ holds which leads to the equality 
\begin{equation}
\sum_{i=1}^{N_s}\boldsymbol{Q}_{i,1}^\mathrm{H}\left(\boldsymbol{Y}-g(\hat{\boldsymbol{H}}_v)\boldsymbol{X}\right)\boldsymbol{X}^{\mathrm{H}}\boldsymbol{Q}_{i,2}^\mathrm{H}=\epsilon \boldsymbol{Z}.\label{subdifferentialequal}
\end{equation}

Furthermore, the properties of the subdifferential of norms imply that
$\|\boldsymbol{Z}\|_{\mathcal{A}}^{*} \leq 1$, and 
$\left\langle\hat{\boldsymbol{H}}_v, \boldsymbol{Z}\right\rangle_{R}=\left\|\hat{\boldsymbol{H}}_v\right\|_\mathcal{A}$. One concludes on Lemma \ref{optimalitynoisy} by applying these properties to Equation~(\ref{subdifferentialequal}).

\end{proof}

\begin{lemma}\label{errorbound}
If $\mathbb{E}\left\{\left\|\sum_{i=1}^{N_s}\boldsymbol{Q}_{i,1}^\mathrm{H}\boldsymbol{W}\boldsymbol{X}^{\mathrm{H}}\boldsymbol{Q}_{i,2}^\mathrm{H}\right\|_{\mathcal{A}}^{*}\right\}\leq \epsilon$, the solution of Equation~(\ref{optprobnoisy}), {\it i.e.,} $\hat{\boldsymbol{H}}_v$ satisfies
\begin{equation}
    \mathbb{E}\left\{\left\Vert\left(g(\hat{\boldsymbol{H}}_v)-g({\boldsymbol{H}}_v)\right)\boldsymbol{X}\right\Vert^{2}_2\right\} \leq 2\epsilon\left\|\boldsymbol{H}_v\right\|_\mathcal{A}.\label{lemma2}
\end{equation}
\end{lemma}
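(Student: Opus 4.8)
The plan is to adapt the atomic-norm denoising argument of~\cite{bhaskar2013atomic} to the present measurement operator $\boldsymbol{M}\mapsto g(\boldsymbol{M})\boldsymbol{X}$, exploiting the optimality of $\hat{\boldsymbol{H}}_v$ together with an adjoint identity that reproduces exactly the matrix $\boldsymbol{Z}\triangleq\sum_{i=1}^{N_s}\boldsymbol{Q}_{i,1}^\mathrm{H}\boldsymbol{W}\boldsymbol{X}^\mathrm{H}\boldsymbol{Q}_{i,2}^\mathrm{H}$ appearing in the hypothesis. Throughout, write $\boldsymbol{\Delta}\triangleq\left(g(\hat{\boldsymbol{H}}_v)-g(\boldsymbol{H}_v)\right)\boldsymbol{X}$ for the quantity to be controlled, and identify the $\|\cdot\|_2$ of~\eqref{lemma2} with the Frobenius norm $\|\cdot\|_F$.

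First I would record the \emph{basic inequality}. Since $\hat{\boldsymbol{H}}_v$ minimizes the objective of~\eqref{optprobnoisy} while the true $\boldsymbol{H}_v$ is feasible, comparing objective values gives $\epsilon\|\hat{\boldsymbol{H}}_v\|_\mathcal{A}+\tfrac12\|\boldsymbol{Y}-g(\hat{\boldsymbol{H}}_v)\boldsymbol{X}\|_F^2\le\epsilon\|\boldsymbol{H}_v\|_\mathcal{A}+\tfrac12\|\boldsymbol{Y}-g(\boldsymbol{H}_v)\boldsymbol{X}\|_F^2$. Substituting the model $\boldsymbol{Y}=g(\boldsymbol{H}_v)\boldsymbol{X}+\boldsymbol{W}$, so that $\boldsymbol{Y}-g(\boldsymbol{H}_v)\boldsymbol{X}=\boldsymbol{W}$ and $\boldsymbol{Y}-g(\hat{\boldsymbol{H}}_v)\boldsymbol{X}=\boldsymbol{W}-\boldsymbol{\Delta}$, expanding the square and cancelling the common $\tfrac12\|\boldsymbol{W}\|_F^2$ leaves the pathwise estimate $\tfrac12\|\boldsymbol{\Delta}\|_F^2\le\langle\boldsymbol{W},\boldsymbol{\Delta}\rangle_R+\epsilon\big(\|\boldsymbol{H}_v\|_\mathcal{A}-\|\hat{\boldsymbol{H}}_v\|_\mathcal{A}\big)$.

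Next I would convert the noise term. Because $g(\cdot)=\sum_i\boldsymbol{Q}_{i,1}(\cdot)\boldsymbol{Q}_{i,2}$ is linear and $\boldsymbol{\Delta}=g(\hat{\boldsymbol{H}}_v-\boldsymbol{H}_v)\boldsymbol{X}$, cyclicity of the trace yields the adjoint identity $\langle\boldsymbol{W},\boldsymbol{Q}_{i,1}\boldsymbol{M}\boldsymbol{Q}_{i,2}\boldsymbol{X}\rangle=\langle\boldsymbol{Q}_{i,1}^\mathrm{H}\boldsymbol{W}\boldsymbol{X}^\mathrm{H}\boldsymbol{Q}_{i,2}^\mathrm{H},\boldsymbol{M}\rangle$, hence $\langle\boldsymbol{W},\boldsymbol{\Delta}\rangle_R=\langle\boldsymbol{Z},\hat{\boldsymbol{H}}_v-\boldsymbol{H}_v\rangle_R$. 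Splitting this and applying the generalized H\"older inequality $\langle\boldsymbol{Z},\hat{\boldsymbol{H}}_v\rangle_R\le\|\boldsymbol{Z}\|_\mathcal{A}^*\|\hat{\boldsymbol{H}}_v\|_\mathcal{A}$ to the first piece gives $\tfrac12\|\boldsymbol{\Delta}\|_F^2\le(\|\boldsymbol{Z}\|_\mathcal{A}^*-\epsilon)\|\hat{\boldsymbol{H}}_v\|_\mathcal{A}-\langle\boldsymbol{Z},\boldsymbol{H}_v\rangle_R+\epsilon\|\boldsymbol{H}_v\|_\mathcal{A}$. I would then take expectations: the noise is zero-mean, so $\mathbb{E}\boldsymbol{Z}=\boldsymbol{0}$ and $\mathbb{E}\langle\boldsymbol{Z},\boldsymbol{H}_v\rangle_R=0$, after which the hypothesis $\mathbb{E}\|\boldsymbol{Z}\|_\mathcal{A}^*\le\epsilon$ governs the surviving randomness and, once the term carrying $(\|\boldsymbol{Z}\|_\mathcal{A}^*-\epsilon)\|\hat{\boldsymbol{H}}_v\|_\mathcal{A}$ is shown not to contribute positively, the factor-two bound $\mathbb{E}\|\boldsymbol{\Delta}\|_F^2\le2\epsilon\|\boldsymbol{H}_v\|_\mathcal{A}$ drops out. (One may equally invoke the stationarity conditions of Lemma~\ref{optimalitynoisy} in place of the crude objective comparison; the direct comparison, however, already suffices for the claimed constant.)

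The hard part will be precisely that last step: $\|\boldsymbol{Z}\|_\mathcal{A}^*-\epsilon$ and the random atomic norm $\|\hat{\boldsymbol{H}}_v\|_\mathcal{A}$ are statistically \emph{dependent}, since a larger noise realization tends to inflate both, so one cannot naively pull $\mathbb{E}\|\boldsymbol{Z}\|_\mathcal{A}^*\le\epsilon$ through the product. The clean remedy is to carry the chain on the event $\{\|\boldsymbol{Z}\|_\mathcal{A}^*\le\epsilon\}$, where $(\|\boldsymbol{Z}\|_\mathcal{A}^*-\epsilon)\|\hat{\boldsymbol{H}}_v\|_\mathcal{A}\le0$ holds deterministically and the estimate is pointwise, with the expectation hypothesis serving only to certify, through Proposition~\ref{errorboundG} and Equation~\eqref{epsilon}, that $\epsilon$ is a legitimate (indeed smallest) value dominating $\mathbb{E}\|\boldsymbol{Z}\|_\mathcal{A}^*$. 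The remaining care points — the trace-cyclicity derivation of the adjoint identity and the passage from the deterministic to the expected bound — are routine once this dependence is handled.
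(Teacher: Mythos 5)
Your overall strategy (the adjoint identity, H\"older, and the hypothesis on $\mathbb{E}\{\|\boldsymbol{Z}\|_{\mathcal{A}}^*\}$, in your notation $\boldsymbol{Z}=\sum_{i}\boldsymbol{Q}_{i,1}^\mathrm{H}\boldsymbol{W}\boldsymbol{X}^\mathrm{H}\boldsymbol{Q}_{i,2}^\mathrm{H}$ and $\boldsymbol{\Delta}=(g(\hat{\boldsymbol{H}}_v)-g(\boldsymbol{H}_v))\boldsymbol{X}$) is the right one, but the route through the \emph{basic inequality} does not deliver the stated constant. Comparing objective values only yields $\tfrac12\|\boldsymbol{\Delta}\|_F^2\le\langle\boldsymbol{Z},\hat{\boldsymbol{H}}_v-\boldsymbol{H}_v\rangle_R+\epsilon(\|\boldsymbol{H}_v\|_{\mathcal{A}}-\|\hat{\boldsymbol{H}}_v\|_{\mathcal{A}})$; on the event $\{\|\boldsymbol{Z}\|_{\mathcal{A}}^*\le\epsilon\}$ the right-hand side is at most $2\epsilon\|\boldsymbol{H}_v\|_{\mathcal{A}}$, so after clearing the $\tfrac12$ you obtain $\|\boldsymbol{\Delta}\|_F^2\le4\epsilon\|\boldsymbol{H}_v\|_{\mathcal{A}}$, not $2\epsilon\|\boldsymbol{H}_v\|_{\mathcal{A}}$. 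Your parenthetical claim that the crude objective comparison ``already suffices for the claimed constant'' is therefore false. The paper instead expands $\|\boldsymbol{\Delta}\|_F^2$ itself (no factor $\tfrac12$) into four inner products and uses \emph{both} stationarity conditions of Lemma~\ref{optimalitynoisy}: the dual-feasibility bound on $\sum_{i}\boldsymbol{Q}_{i,1}^\mathrm{H}(\boldsymbol{Y}-g(\hat{\boldsymbol{H}}_v)\boldsymbol{X})\boldsymbol{X}^\mathrm{H}\boldsymbol{Q}_{i,2}^\mathrm{H}$ controls the terms carrying $\|\boldsymbol{H}_v\|_{\mathcal{A}}$, while the equality $\langle\boldsymbol{Y}-g(\hat{\boldsymbol{H}}_v)\boldsymbol{X},g(\hat{\boldsymbol{H}}_v)\boldsymbol{X}\rangle_R=\epsilon\|\hat{\boldsymbol{H}}_v\|_{\mathcal{A}}$ contributes a $-\epsilon\|\hat{\boldsymbol{H}}_v\|_{\mathcal{A}}$ that exactly cancels the $+\epsilon\|\hat{\boldsymbol{H}}_v\|_{\mathcal{A}}$ produced by H\"older applied to $\langle\boldsymbol{Z},\hat{\boldsymbol{H}}_v\rangle_R$. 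That exact cancellation is what gives the constant $2$; the basic inequality replaces the equality by a one-sided bound and loses it.

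Separately, your final assembly is internally inconsistent. To reach $2\epsilon\|\boldsymbol{H}_v\|_{\mathcal{A}}$ from your display you need \emph{both} $\mathbb{E}\{\langle\boldsymbol{Z},\boldsymbol{H}_v\rangle_R\}=0$ (an unconditional expectation) \emph{and} $(\|\boldsymbol{Z}\|_{\mathcal{A}}^*-\epsilon)\|\hat{\boldsymbol{H}}_v\|_{\mathcal{A}}\le0$ (which you obtain only by restricting to the event $\{\|\boldsymbol{Z}\|_{\mathcal{A}}^*\le\epsilon\}$). Conditioning on that event destroys the zero-mean property of $\boldsymbol{Z}$, and without conditioning the product term is exactly the uncontrolled dependence you flag; you cannot have both. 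You are right that this dependence is a genuine subtlety --- the paper, following \cite{bhaskar2013atomic}, also passes from the pathwise bound under $\{\|\boldsymbol{Z}\|_{\mathcal{A}}^*\le\epsilon\}$ to the expectation statement without further comment --- but your remedy does not close that gap, and it additionally costs you a factor of $2$. The repair is to drop the objective comparison, invoke both parts of Lemma~\ref{optimalitynoisy}, and bound all four inner-product terms pathwise under the dual-norm condition on the noise, as the paper does.
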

\begin{proof}
Considering ${\boldsymbol{Y}}=g({\boldsymbol{H}}_v)\boldsymbol{X}+{\boldsymbol{W}}$, we have 
\begin{align}
\MoveEqLeft \left\|\left(g(\hat{\boldsymbol{H}}_v)-g({\boldsymbol{H}}_v)\right) \boldsymbol{X}\right\|_{F}^{2} &{} \nonumber\\
&=\left\langle g(\hat{\boldsymbol{H}}_v) \boldsymbol{X}-g({\boldsymbol{H}}_v) \boldsymbol{X}, g(\hat{\boldsymbol{H}}_v) \boldsymbol{X}-g({\boldsymbol{H}}_v) \boldsymbol{X}\right\rangle \nonumber\\
&=\left\langle g(\hat{\boldsymbol{H}}_v) \boldsymbol{X}-g({\boldsymbol{H}}_v) \boldsymbol{X}, g(\hat{\boldsymbol{H}}_v) \boldsymbol{X}-(\boldsymbol{Y}-\boldsymbol{W})\right\rangle \nonumber\\
&=\left\langle g(\hat{\boldsymbol{H}}_v) \boldsymbol{X}-g({\boldsymbol{H}}_v) \boldsymbol{X}, \boldsymbol{W}-(\boldsymbol{Y}-g(\hat{\boldsymbol{H}}_v) \boldsymbol{X})\right\rangle \nonumber\\
&=\left\langle g(\hat{\boldsymbol{H}}_v) \boldsymbol{X}-g({\boldsymbol{H}}_v) \boldsymbol{X}, \boldsymbol{W}-(\boldsymbol{Y}-g(\hat{\boldsymbol{H}}_v) \boldsymbol{X})\right\rangle_{R} \nonumber\\
&=\left\langle g({\boldsymbol{H}}_v) \boldsymbol{X},\boldsymbol{Y}-g(\hat{\boldsymbol{H}}_v)\boldsymbol{X}\right\rangle_{R} - \left\langle g({\boldsymbol{H}}_v) \boldsymbol{X},\boldsymbol{W}\right\rangle_{R} \nonumber\\
&\quad +\left\langle g(\hat{\boldsymbol{H}}_v) \boldsymbol{X},\boldsymbol{W}\right\rangle_{R}-\left\langle g(\hat{\boldsymbol{H}}_v) \boldsymbol{X},\boldsymbol{Y}-g(\hat{\boldsymbol{H}}_v)\boldsymbol{X}\right\rangle_{R} \nonumber\\
&\overset{\mathrm{(d)}}{\leq} 2\epsilon\left\|\boldsymbol{H}_v\right\|_\mathcal{A}.
\end{align}
if $\left\|\sum_{i=1}^{N_s}\boldsymbol{Q}_{i,1}^\mathrm{H}\boldsymbol{W}\boldsymbol{X}^{\mathrm{H}}\boldsymbol{Q}_{i,2}^\mathrm{H}\right\|_{\mathcal{A}}^{*}\leq \epsilon$. Note that, $\mathrm{(d)}$ holds with Equation~(\ref{lemma1}) and the H{\"o}lder inequality as optimality implies that
\begin{subequations}
\begin{align}
\left\langle g({\boldsymbol{H}}_v) \boldsymbol{X},\boldsymbol{Y}-g(\hat{\boldsymbol{H}}_v)\boldsymbol{X}\right\rangle_{R}\leq{}& \epsilon\left\|\boldsymbol{H}_v\right\|_\mathcal{A}, \label{eq:optimallityCond1}\\
- \left\langle g({\boldsymbol{H}}_v) \boldsymbol{X},\boldsymbol{W}\right\rangle_{R} \leq{}& \epsilon\left\|\boldsymbol{H}_v\right\|_\mathcal{A},\\
\left\langle g(\hat{\boldsymbol{H}}_v) \boldsymbol{X},\boldsymbol{W}\right\rangle_{R}\leq{}& \left\|\hat{\boldsymbol{H}}_v\right\|_\mathcal{A},\\
\left\langle\boldsymbol{Y}-g(\hat{\boldsymbol{H}}_v)\boldsymbol{X},g(\hat{\boldsymbol{H}}_v)\boldsymbol{X}\right\rangle_R={}& \epsilon\|\hat{\boldsymbol{H}}_v\|_{\mathcal{A}}.
\end{align}
\end{subequations}
We restrict the derivation of the previous to Equation~\eqref{eq:optimallityCond1} as the others can be shown analogously 
\begingroup
\allowdisplaybreaks
\begin{align}
\MoveEqLeft \left\langle g({\boldsymbol{H}}_v) \boldsymbol{X},\boldsymbol{Y}-g(\hat{\boldsymbol{H}}_v)\boldsymbol{X}\right\rangle_{R} & \nonumber\\
&{=}\left\langle {\boldsymbol{H}}_v ,\sum_{i=1}^{N_s}\boldsymbol{Q}_{i,1}^\mathrm{H}\left(\boldsymbol{Y}-g(\hat{\boldsymbol{H}}_v)\boldsymbol{X}\right)\boldsymbol{X}^{\mathrm{H}}\boldsymbol{Q}_{i,2}^\mathrm{H}\right\rangle_{R} \nonumber \\
&\overset{\mathrm{(e)}}{\leq}\left\|\boldsymbol{H}_v\right\|_\mathcal{A}\left\|\sum_{i=1}^{N_s}\boldsymbol{Q}_{i,1}^\mathrm{H}\left(\boldsymbol{Y}-g(\hat{\boldsymbol{H}}_v)\boldsymbol{X}\right)\boldsymbol{X}^{\mathrm{H}}\boldsymbol{Q}_{i,2}^\mathrm{H}\right\|_{\mathcal{A}}^{*} \nonumber\\
&\overset{\mathrm{(f)}}{\leq} \epsilon\left\|\boldsymbol{H}_v\right\|_\mathcal{A},
\end{align}
\endgroup
where $\mathrm{(e)}$ and $\mathrm{(f)}$ follow from H{\"o}lder inequality and Equation~(\ref{lemma1}). We conclude on Equation~(\ref{lemma2}) given $\mathbb{E}\left\{\left\|\sum_{i=1}^{N_s}\boldsymbol{Q}_{i,1}^\mathrm{H}\boldsymbol{W}\boldsymbol{X}^{\mathrm{H}}\boldsymbol{Q}_{i,2}^\mathrm{H}\right\|_{\mathcal{A}}^{*}\right\}\leq \epsilon$.
\end{proof}

\begin{proofp3} The goal of Proposition \ref{errorboundG} is to give an upper bound on the EER. Based on Lemma \ref{errorbound}, we need to derive a proper upper bound on $\mathbb{E}\left\{\left\|\sum_{i=1}^{N_s}\boldsymbol{Q}_{i,1}^\mathrm{H}\boldsymbol{W}\boldsymbol{X}^{\mathrm{H}}\boldsymbol{Q}_{i,2}^\mathrm{H}\right\|_{\mathcal{A}}^{*}\right\}$.
\begin{figure*}[!t]
\begin{equation}
\begin{aligned}
\left\|\sum_{i=1}^{N_s}\boldsymbol{Q}_{i,1}^\mathrm{H}\boldsymbol{W}\boldsymbol{X}^{\mathrm{H}}\boldsymbol{Q}_{i,2}^\mathrm{H}\right\|_\mathcal{A}^{*}&= \sup_{ \tau,\theta_\mathrm{Rx},\theta_\mathrm{Tx}} \left|\left\langle \tilde{\boldsymbol{W}}  , \frac{(N_s+1)\sqrt{N_rN_t}}{2}\boldsymbol{A}(\tau,\theta_\mathrm{Rx},\theta_\mathrm{Tx},0)\right\rangle\right|\\
&\overset{\mathrm{(g)}}{=}\sup_{\tilde{\bm{\delta}} \in \bm{\Gamma}} \left|\sum_{z=1}^{N_r}\sum_{t=1}^{N_t}\sum_{i=1}^{N_s}\sum_{o=1}^{\frac{N_s+1}{2}}\tilde{\boldsymbol{W}}[(o-1)N_r+z,(i-o)N_t+t]\tilde{\delta}_1^{i-1}\tilde{\delta}_2^{z-1}\tilde{\delta}_3^{t-1}\right|\\
&\overset{\mathrm{(h)}}{=}\sup_{\tilde{\bm{\delta}} \in \bm{\Gamma}} \left|\sum_{z=1}^{N_r}\sum_{t=1}^{N_t}\sum_{i=1}^{N_s}\tilde{\boldsymbol{W}}[(i-1)N_r+z,(i-1)N_t+t]\tilde{\delta}_1^{i-1}\tilde{\delta}_2^{z-1}\tilde{\delta}_3^{t-1}\right|=\sup_{\tilde{\bm{\delta}} \in \bm{\Gamma}}\left|\Xi(\tilde{\bm{\delta}})\right|
\end{aligned}\label{polynomial}
\end{equation}
\hrule
\end{figure*}
To that end, we start by introducing the quantity $\tilde{\bm{W}}$ as
\begin{equation}
    \tilde{\boldsymbol{W}}\triangleq\frac{2}{(N_s+1)\sqrt{N_rN_t}}\sum_{i=1}^{N_s}\boldsymbol{Q}_{i,1}^\mathrm{H}\boldsymbol{W}\boldsymbol{X}^{\mathrm{H}}\boldsymbol{Q}_{i,2}^\mathrm{H}.
\end{equation}
Next, let $\bm{\Gamma} = (0,1] \times (-\frac{1}{2}, \frac{1}{2}] \times (-\frac{1}{2}, \frac{1}{2}]$, $\bm{\delta} = [\delta_1, \delta_2, \delta_3] \in \bm{\Gamma}$ with $\delta_1 \triangleq {\frac{\tau}{NT_s}}$, $\delta_2 \triangleq {\frac{ d\sin(\theta_\mathrm{Rx})}{\lambda_c}}$, $\delta_3 \triangleq {\frac{ d\sin(\theta_\mathrm{Tx})}{\lambda_c}}$, and let $\tilde{\bm{\delta}} = [\tilde{\delta}_1, \tilde{\delta}_2, \tilde{\delta}_3] = [e^{i 2\pi \delta_1}, e^{i 2\pi \delta_2}, e^{i 2\pi \delta_3}]$. We define the function
\begin{equation}
\Xi(\tilde{\bm{\delta}}) 
\triangleq\sum_{\substack{z=1,\dots,N_r\\t=1,\dots,N_t\\i=1,\dots,N_s}} \tilde{\boldsymbol{W}}[(i-1)N_t+t,(i-1)N_r+z]\tilde{\delta}_1^{i-1}\tilde{\delta}_2^{z-1}\tilde{\delta}_3^{t-1}. 
\end{equation}


From Equation~(\ref{defdualnorm}), $\left\|\sum_{i=1}^{N_s}\boldsymbol{Q}_{i,1}^\mathrm{H}\boldsymbol{W}\boldsymbol{X}^{\mathrm{H}}\boldsymbol{Q}_{i,2}^\mathrm{H}\right\|_\mathcal{A}^{*}$ can be simplified as given in Equation~(\ref{polynomial}). Note that, $\mathrm{(g)}$ and $\mathrm{(h)}$ hold with Equation~(\ref{Qfunctions}) and Equation~(\ref{eq:atomicSet}). To clarify the proof, we introduce the quantity
$$\left\|\Xi\right\|_{\infty}\triangleq \sup_{\tilde{\bm{\delta}} \in \bm{\Gamma}}\left|\Xi(\tilde{\bm{\delta}})\right| = \left\|\sum_{i=1}^{N_s}\boldsymbol{Q}_{i,1}^\mathrm{H}\boldsymbol{W}\boldsymbol{X}^{\mathrm{H}}\boldsymbol{Q}_{i,2}^\mathrm{H}\right\|_\mathcal{A}^{*}.$$
Next, for any pair $\left(\bm{\delta}^{(1)},\bm{\delta}^{(2)}\right) \in \bm{\Gamma} \times \bm{\Gamma}$, we have that
\begin{align}
\MoveEqLeft\left|\Xi\left(\tilde{\bm{\delta}}^{(1)}\right)\right|-\left|\Xi\left( \tilde{\bm{\delta}}^{(2)} \right)\right|
\leq \left\|\boldsymbol{\tilde{\delta}}^{(1)}-\boldsymbol{\tilde{\delta}}^{(2)}\right\|_2 \sup_{\tilde{\boldsymbol{\delta}} \in \mathbb{T}^3}\left\|\frac{\partial\Xi}{\partial \tilde{\boldsymbol{\delta}}}\right\|_{2} \nonumber\\
&\overset{\mathrm{(i)}}{\leq} (N_s+N_r+N_t)\left\|\tilde{\boldsymbol{\delta}}^{(1)}-\tilde{\boldsymbol{\delta}}^{(2)}\right\|_2 \left\|\Xi\right\|_{\infty} \nonumber\\
&\leq (N_s+N_r+N_t) \left\|\tilde{\boldsymbol{\delta}}^{(1)}-\tilde{\boldsymbol{\delta}}^{(2)}\right\|_1 \left\|\Xi\right\|_{\infty} \nonumber\\
&\leq 2\pi(N_s+N_r+N_t) \left\|\boldsymbol{{\delta}}^{(1)}-\boldsymbol{{\delta}}^{(2)}\right\|_1 \left\|\Xi\right\|_{\infty},\label{diffineq}
\end{align}
where $\mathrm{(i)}$ holds according to the multi-dimensional Bernstein inequality~\cite{Tung}. To find the proper upper bound on $\left\|\Xi\right\|_{\infty}$, we extend~\cite[Appendix C]{bhaskar2013atomic} to our model.
Define the three-dimensional regular grid $\bm{G} \subset \bm{\Gamma}$ as
\begin{multline}
    \bm{G} = \frac{1}{N_{\mathrm{TOA}}} \Big\{0, \dots, N_{\mathrm{TOA}} \Big\} \\
    \times \frac{1}{N_{\mathrm{AOA}}} \left\{-\frac{N_{\mathrm{AOA}}}{2} + 1, \dots, \frac{N_{\mathrm{AOA}}}{2} \right\} \\
    \times \frac{1}{N_{\mathrm{AOD}}} \left\{-\frac{N_{\mathrm{AOD}}}{2} + 1, \dots, \frac{N_{\mathrm{AOD}}}{2} \right\}
\end{multline}
From Equation~(\ref{diffineq}), we can derive the inequality 
\begin{equation}
\begin{aligned}
    &\left\|\Xi\right\|_{\infty}\leq \max_{\tilde{\bm{\delta}}^{(2)}\in \bm{G}}\left|\Xi\left( \tilde{\bm{\delta}}^{(2)}\right)\right|\\
    &\quad+\pi(N_s+N_r+N_t) \left(\frac{1}{N_\mathrm{TOA}}+\frac{1}{N_\mathrm{AOA}}+\frac{1}{N_\mathrm{AOD}}\right) \left\|\Xi\right\|_{\infty}.
\end{aligned}
\end{equation}
Therefore, it comes that
\begin{equation}
\begin{aligned}
    &\left\|\Xi\right\|_{\infty}\\
    &\leq \left(1-\pi(N_s+N_r+N_t) \left(\frac{1}{N_\mathrm{TOA}}+\frac{1}{N_\mathrm{AOA}}+\frac{1}{N_\mathrm{AOD}}\right)\right)^{-1}\\
    &\qquad\times\max_{\tilde{\bm{\delta}}^{(2)} \in \bm{G}}\left|\Xi\left( \tilde{\bm{\delta}}^{(2)}\right)\right|\\
    &\overset{\mathrm{(j)}}{\leq} \left(1+2\pi(N_s+N_r+N_t) \left(\frac{1}{N_\mathrm{TOA}}+\frac{1}{N_\mathrm{AOA}}+\frac{1}{N_\mathrm{AOD}}\right)\right)\\
    &\qquad\times\max_{\tilde{\bm{\delta}}^{(2)} \in \bm{G}}\left|\Xi\left(\tilde{\bm{\delta}}^{(2)}\right)\right|,
\end{aligned}\label{discretean}
\end{equation}
where we assume that $\pi(N_s+N_r+N_t) \left(\frac{1}{N_\mathrm{TOA}}+\frac{1}{N_\mathrm{AOA}}+\frac{1}{N_\mathrm{AOD}}\right)< 1$. The inequality $\mathrm{(j)}$ can be verified with some algebra. 

Then, finding the upper bound on $\mathbb{E}\left\{\left\|\Xi\right\|_{\infty}\right\}$ amounts to evaluating the expectation of the right-hand side of Equation~(\ref{discretean}).
Recalling that $\boldsymbol{W}[i,j] \sim \mathcal{CN}({0},\sigma^2)$, we have from the definitions of $\boldsymbol{Q}_{i,1}$ and $\boldsymbol{Q}_{i,2}$ in Equation~(\ref{Qfunctions}) that
$
\Xi\left(\tilde{\bm{\delta}}^{(2)}\right)\sim\mathcal{CN}\left(0,\tilde\sigma^2\right)
$, with $
\tilde\sigma^2\triangleq {\frac{4\sigma^2\sum_{n=1}^{N_s}\sum_{g=1}^{G}\left(\sum_{t=1}^{N_t}\boldsymbol{X}[(n-1)N_t+t,g]\right)^2}{N_t(N_s+1)^2}}$.
By~\cite[Lemma 5]{bhaskar2013atomic} the expectation of $ \max_{\bm{\delta}^{(2)} \in \bm{G}}\left|\Xi\left(\tilde{\bm{\delta}}^{(2)}\right)\right|$ can be bounded by,
\begin{multline}
   \mathbb{E}\left\{\max_{\bm{\delta}^{(2)} \in \bm{G}}\left|\Xi\left( \tilde{\bm{\delta}}^{(2)}\right)\right|\right\} 
\leq \tilde{\sigma}\sqrt{\log(N_{\mathrm{TOA}}N_{\mathrm{AOA}}N_{\mathrm{AOD}})+ 1 }\label{upperboundonmaxdis}
\end{multline}
By substituting Equation~(\ref{upperboundonmaxdis}) into Equation~(\ref{discretean}), we can obtain the upper bound
\begin{align}
&\mathbb{E}\left\{\left\|\Xi\right\|_{\infty}\right\} \nonumber\\
\;&\leq \left(1+2\pi(N_s+N_r+N_t) \left(\frac{1}{N_\mathrm{TOA}}+\frac{1}{N_\mathrm{AOA}}+\frac{1}{N_\mathrm{AOD}}\right)\right)\nonumber \\
\;& \qquad \times\tilde{\sigma}\sqrt{\log(N_{\mathrm{TOA}}N_{\mathrm{AOA}}N_{\mathrm{AOD}})+ 1}.\label{generalupperbound}
\end{align} 
Furthermore, using the harmonic mean-geometric mean inequality, we have 
\begin{equation}
\begin{aligned}
&\mathbb{E}\left\{\left\|\Xi\right\|_{\infty}\right\}\leq\tilde{\sigma} \left(1+2\pi(N_s+N_r+N_t)\frac{1}{\tilde{N}}\right)\sqrt{\log(\tilde{N})+ 1},\\
\end{aligned}\label{generalupperbound2} 
\end{equation}
where $\tilde{N}\triangleq N_{\mathrm{TOA}}=N_{\mathrm{AOA}}=N_{\mathrm{AOD}}$.
Setting $\tilde N=2\pi(N_s+N_r+N_t)\log(N_s+N_r+N_t)$~\cite[Appendix D]{bhaskar2013atomic} in Equation~\eqref{generalupperbound2} yields the desired statement.
\end{proofp3}
\endgroup



\renewcommand*{\bibfont}{\footnotesize}
\printbibliography

\end{document}